\documentclass[letterpaper, 10pt, journal]{IEEEtran}
\usepackage[caption=false,font=normalsize,labelfont=sf,textfont=sf]{subfig}
\usepackage{graphicx} 
\usepackage{blindtext}
\usepackage{amssymb}
\usepackage{amsmath}
\usepackage{bm}
\usepackage{tikz}
\usetikzlibrary{calc}
\usetikzlibrary{arrows.meta, positioning, fit, backgrounds}
\usepackage{pgfplots}
\pgfplotsset{compat=newest}
\usepackage{stfloats}  
\usepackage{algorithm}
\usepackage[noend]{algcompatible} 
\usepackage{algpseudocode}        
\usepackage{glossaries}
\usepackage{float}
\usepackage{xcolor}
\usepackage{siunitx}
\usepackage{cite}
\usepackage{enumitem}
\usetikzlibrary{shapes}
\usetikzlibrary{decorations.pathreplacing}
\usepackage{bbm}
\usetikzlibrary{fit,backgrounds,shapes.misc}
\newcounter{MYtempeqncnt}

\usepackage{shellesc}

\pgfplotsset{compat=1.18}
\newtheorem{theorem}{Theorem}
\newtheorem{proposition}{Proposition}

\input{antStyles/satellite.sty}
\input{antStyles/person.sty}
\input{antStyles/phone.sty}
\input{antStyles/bs.sty}
\input{antStyles/car.sty}

\newcommand{\CdfJ}{3}
\newcommand{\CdfK}{25}
\newcommand{\FixedJ}{2}
\newcommand{\FixedK}{40}

\newcommand{\RequireGeneratedFile}[1]{%
	\IfFileExists{#1}{}{%
		\PackageError{dynamic-fdd-plots}{Missing generated file '#1'}{Compile with -shell-escape or run generate_plot_tables.py manually before compiling.}%
	}%
}

\RequireGeneratedFile{generated/cdf_spin0.csv}
\RequireGeneratedFile{generated/cdf_spin1.csv}
\RequireGeneratedFile{generated/cdf_optimized.csv}
\RequireGeneratedFile{generated/avg_vs_k.csv}
\RequireGeneratedFile{generated/avg_vs_j.csv}
\RequireGeneratedFile{generated/improvement_vs_k.csv}
\RequireGeneratedFile{generated/improvement_vs_j.csv}
\RequireGeneratedFile{generated/improvement_p10_vs_k.csv}
\RequireGeneratedFile{generated/improvement_p10_vs_j.csv}

\pgfplotsset{
	mygrid/.style={
		grid=both,
		major grid style={draw=gray!35},
		minor grid style={draw=gray!20},
		tick align=outside,
		tick pos=left,
		line width=0.95pt
	},
	plotspinzero/.style={
		color=orange!85!black,
		line width = 1pt,
		mark=triangle*,
		mark options={solid, scale=1.6}
	},
	plotspinone/.style={
		color=mygreen,
		line width = 1pt,
		mark=o,
		mark options={solid, scale=1.5, fill= mygreen}
	},
	plotopt/.style={
		color=mydarkred,
		solid,
		mark=none,
		line width = 1.5pt,
		mark options={solid, scale=0.9}
	}
}

\tikzset{
	problem/.style={
		rectangle,
		draw,
		rounded corners,
		minimum width=1.8cm,
		minimum height=0.8cm,
		align=center
	},
	arrow/.style={-{Latex}, thick},
	doublearrow/.style={<->, thick}
}

\makeglossaries

\newacronym{3gpp}{3GPP}{$\text{3}^{\text{rd}}$ generation partnership project}
\newacronym{5g}{5G}{fifth-generation}
\newacronym{6g}{6G}{sixth-generation}

\newacronym{bs}{BS}{base-station}

\newacronym{cdf}{CDF}{cumulative distribution function}
\newacronym{csi}{CSI}{channel state information}

\newacronym{drl}{DRL}{deep-reinforcement learning}
\newacronym{dl}{DL}{downlink}
\newacronym{dof}{DoF}{degree-of-freedom}

\newacronym{ecef}{ECEF}{earth-centered, earth-fixed}
\newacronym{efc}{EFC}{Earth-fixed cell}
\newacronym{emc}{EMC}{Earth-moving cell}

\newacronym{fdd}{FDD}{frequency-division duplex}
\newacronym{fspl}{FSPL}{free-space path loss}

\newacronym{gnb}{gNB}{next-generation node-B}
\newacronym{gnss}{GNSS}{global navigation satellite system}
\newacronym{gso}{GSO}{geostationary Earth orbit}
\newacronym{iot}{IoT}{internet-of-things}

\newacronym{leo}{LEO}{low Earth orbit}
\newacronym{los}{LOS}{line-of-sight}

\newacronym{mab}{MAB}{multi-arm bandit}
\newacronym{meo}{MEO}{medium Earth orbit }
\newacronym{mimo}{MIMO}{multiple-input multiple-output}
\newacronym{minlp}{MINLP}{mixed integer non-linear programming}
\newacronym{milp}{MILP}{mixed integer linear programming}
\newacronym{mrc}{MRC}{maximum-ratio combiner}
\newacronym{mrt}{MRT}{maximum-ratio transmission}

\newacronym{ned}{NED}{North-East-Downward}
\newacronym{neu}{NEU}{North-East-Upward}
\newacronym{ntn}{NTN}{non-terrestrial network}
\newacronym{nlos}{NLOS}{non-line-of-sight}
\newacronym{ngso}{NGSO}{non-geostationary Earth orbit}

\newacronym{ofdm}{OFDM}{orthogonal-frequency division multiplexing}

\newacronym{ran}{RAN}{radio-access network}

\newacronym{sa}{SA}{simulated annealing}
\newacronym{sinr}{SINR}{signal-to-noise plus interference ratio}
\newacronym{soc}{SOC}{second-order cone}

\newacronym{tdd}{TDD}{time-division duplex}

\newacronym{ue}{UE}{user-equipment}
\newacronym{ul}{UL}{uplink}
\newacronym{upa}{UPA}{uniform-planar array}

\newacronym{wrt}{w.r.t.}{with respect to}

\glsdisablehyper  
\glsenableentrycount

\definecolor{myblue}{RGB}{0,104,180}
\definecolor{myred}{RGB}{244,161,152}
\definecolor{mydarkred}{RGB}{157,34,70}
\definecolor{mygreen}{RGB}{0,136,120}
\definecolor{mypantone}{RGB}{59,41,106}

\def\BibTeX{{\rm B\kern-.05em{\sc i\kern-.025em b}\kern-.08em
		T\kern-.1667em\lower.7ex\hbox{E}\kern-.125emX}}
\usepackage{balance}
\allowdisplaybreaks[3]

\allowdisplaybreaks[3]
		\newcommand\submittedtext{%
	\footnotesize This work has been submitted to the IEEE for possible publication. Copyright may be transferred without notice, after which this version may no longer be accessible.}

\newcommand\submittednotice{%
	\begin{tikzpicture}[remember picture,overlay]
		\node[anchor=south,yshift=10pt] at (current page.south) {\fbox{\parbox{\dimexpr0.65\textwidth-\fboxsep-\fboxrule\relax}{\submittedtext}}};
	\end{tikzpicture}%
}

\title{Dynamic FDD for Spectrum Sharing \\ in Non-Terrestrial Networks}

\author{%
	Sourav Mukherjee,~\IEEEmembership{Graduate Student Member,~IEEE}, 
	Bho Matthiesen,~\IEEEmembership{Senior Member,~IEEE}, 
    Armin Dekorsy,~\IEEEmembership{Senior Member,~IEEE}, 
	Petar Popovski,~\IEEEmembership{Fellow,~IEEE}
	\thanks{Part of the paper is accepted for publication in ICC Workshops 2026~\cite{Mukherjee}.}
		\thanks{
		This work was supported by the Federal Ministry of Research, Technology and Space (BMFTR) of Germany under grant 16KIS2428K (6G-Coverage) and 16KIS2408 (Open6GHub+).
	}%
	\thanks{Sourav Mukherjee and Armin Dekorsy are with the
		Gauss-Olbers Center and the Department of Communications Engineering,
		University of Bremen, 28359 Bremen, Germany (e-mail: mukherjee@ant.uni-bremen.de; dekorsy@ant.uni-bremen.de).}
    \thanks{Bho Matthiesen is with the
		Department of Communications Engineering,
		Paderborn University, 33098 Paderborn, Germany (e-mail: matthiesen@nt.uni-paderborn.de).}
	\thanks{Petar Popovski is with the Department of Electronic Systems, Aalborg
		University, 9100 Aalborg, Denmark, and also with the Department of Communications Engineering, University of Bremen, 28359 Bremen, Germany
		(e-mail: petarp@es.aau.dk).}

}
\begin{document}
		\bstctlcite{IEEEexample:BSTcontrol}

	\maketitle
	\submittednotice
	\begin{abstract}
	Future 6G networks are envisioned to integrate low Earth orbit satellite mega-constellations to enable seamless global connectivity, particularly in underserved and remote areas. However, the deployment of dense mega-constellations introduces interference among satellites operating over shared frequency bands. This represents a rather new setup for studying spectrum sharing, which exacerbates the limited flexibility of conventional \cgls{fdd} systems based on fixed bands for downlink and uplink transmissions. We address this spectrum-sharing problem and propose dynamic re-assignment of \cgls{fdd} bands for improved interference management in dense deployments{, as well as} evaluate the performance gain of this approach. To this end, we formulate a joint optimization problem that incorporates dynamic band assignment, user scheduling, and power allocation in both directions. This non-convex mixed integer problem is solved using a combination of equivalence transforms, alternating optimization, and state-of-the-art industrial-grade mixed integer solvers. Numerical results demonstrate that the proposed approach of dynamic \cgls{fdd} band assignment significantly enhances system performance over conventional \cgls{fdd}, achieving up to 30\% improvement in throughput in dense deployments.
\end{abstract}
\glsresetall
\begin{IEEEkeywords}
	Spectrum sharing, Dynamic FDD, LEO, interference mitigation.
\end{IEEEkeywords}
	
\section{Introduction}

\IEEEPARstart{T}{he} digital divide continues to separate regions with reliable broadband connectivity from those with limited or no access, and it remains an important motivation for the development of 6G~\cite{Yaacoub2020, Rajatheva2020, Leyva-Mayorga2020, Na2024, GSMA2025}. Satellite communications are widely regarded as a key technology to enable these visions in future 6G networks. By deploying \cgls{gso}, constellations of \cgls{meo} and \cgls{leo} satellites, it becomes possible to extend broadband coverage to users across the globe~\cite{Wu2024}. Among these, \cgls{leo} satellites have recently received a significant attention from both academia and industry~\cite{Wang2022, Luo2024}. Consequently, organizations such as Iridium, OneWeb, SpaceX, Amazon, and the Europe's IRIS$^2$ have already deployed, or announced plans to deploy large-scale \cgls{leo} satellite constellations aimed at providing global broadband connectivity~\cite{Wang2022}.
	
As \cgls{leo} satellite constellations continue to expand, the availability of orthogonal spectral resources becomes a critical bottleneck~\cite{Lee2023, Choi2024}. Traditionally, satellite communication systems operate in designated frequency bands such as the L, S, Ku, and Ka bands. However, the rapid deployment of large-scale \cgls{leo} constellations significantly increases the demand for spectral resources. As the number of satellites grows, many of them cover overlapping geographical regions. In such scenarios, allocating strictly orthogonal frequency bands to each satellite becomes increasingly limiting. Consequently, multiple satellites are often required to operate within the same frequency bands, which introduces new challenges related to interference management and efficient spectrum utilization.
	
Additionally, an operator often has licenses in multiple frequency bands, e.g.,  in sub-6 GHz and in the upper mid-band between 7–24 GHz. Although these bands are available at the network, individual satellites are typically configured to operate within a single pre-defined band. Consequently, the available spectral diversity cannot be fully exploited at the satellite level. Recent advances in satellite payload architectures have begun to address this limitation by enabling multi-band satellite operation~\cite{Aboagye2024}. With these developments, a satellite can transmit and receive across multiple bands, allowing the system to select the most suitable band depending on network conditions and operational requirements~\cite{Nie2019, YuanTang2024}. However, the potential benefits of multi-band capability are still under-explored. In most existing systems, each frequency band is partitioned to support both \cgls{ul} and \cgls{dl} transmissions, following the principle of \cgls{fdd}, where separate frequency bands are reserved for each transmission direction. While this structure simplifies system design, it reduces the flexibility with which spectral resources can be utilized across multiple bands. Furthermore, 6G is expected to support heterogeneous services which results in asymmetric bandwidth requirements for \cgls{ul} and \cgls{dl}. Notably, artificial intelligence (AI) applications are expected to change the \cgls{ul}/\cgls{dl} requirements, in many applications favoring the \cgls{ul}~\cite{ngmn2026aiSurge}. Under such conditions, a rigid band selection with static partitioning to support \cgls{ul} and \cgls{dl} is inefficient and limiting. 

To address these limitations, we propose a flexible band assignment strategy in which an entire frequency band is dedicated to a specific transmission direction, either \cgls{ul} or \cgls{dl}, selected from the available frequency bands. This assignment can be dynamically adapted according to traffic demand, interference conditions, and network requirements. By decoupling band selection from fixed duplexing partitions, the system gains additional flexibility in utilizing multi-band resources while preserving the fundamental principle of \cgls{fdd}. Since the proposed approach maintains this separation while enabling adaptive directional selection across bands, we refer to our scheme as \textit{dynamic \cgls{fdd}}. This method introduces an additional \cgls{dof} for multi-band satellite systems and provides a structured mechanism to improve spectrum utilization in satellite mega-constellations.

\begin{figure}[t]
	\centering
	\usetikzlibrary{arrows.meta, positioning, fit}
	\input{TikzFig/Fig1.tex}
	
	\caption{Illustration of dynamic \cgls{fdd} in a two-satellite scenario. Two FDD configurations out of four across bands $\Omega_1$ and $\Omega_2$ are depicted, where UL and DL directions are interchanged. This degree-of-freedom modifies how links are scheduled over the available bands and consequently changes system characteristics. This enables more favorable link arrangements and improves spectral utilization compared to conventional FDD, where the band assignment is static.}
	\label{fig:1}	
\end{figure}

\subsection{Dynamic FDD}
To explain the benefit of dynamic \cgls{fdd}, consider a scenario with two satellites as illustrated in Fig.~\ref{fig:1}. It illustrates two duplexing configurations, options 1 and 2, out of four possible configurations across the bands $\Omega_1$ and $\Omega_2$, where \cgls{ul} and \cgls{dl} directions can be interchanged. Option~1 is depicted in Fig.~\ref{fig:1}(a)-(b), where satellite~$1$ serves \cgls{ue}~$1$ and satellite~$2$ serves the remaining users.  With the new \cgls{dof} of band switching, we can create alternative configurations, such as option~2, Fig.~\ref{fig:1}(c)-(d), where satellite~$1$ serves \cglspl{ue}~$1$ and $2$, and satellite~$2$ serves the remaining users. Therefore, with the introduction of \cgls{dof}, the scheduling decisions can be re-configured. The key observation from Fig.~\ref{fig:1} is that the duplexing decision not only determines the operating frequency of a link but also governs how different links interact to each other. In particular, the interference patterns of the system change depending on how \cgls{ul} and \cgls{dl} transmissions are mapped onto the available bands. As seen in the two options, switching the band assignment effectively reshapes the interference pattern across the network. However, the benefit of dynamic \cgls{fdd} is not limited to selecting the band with lower interference. More fundamentally, it introduces an additional \cgls{dof} that allows the systems to create alternative link configurations.  As a result, the system can avoid unfavorable configurations where strong interfering links are forced to coexist in the same band and direction. From a resource allocation perspective, this additional \cgls{dof} enlarges the feasible set of link configurations. Instead of operating under a fixed duplexing structure, the network can adapt the \cgls{ul}-\cgls{dl} split to better match the spatial distribution of \cglspl{ue} and satellites. Consequently, dynamic \cgls{fdd} improves performance through two coupled effects. First, it allows the system to select the more favorable band for each transmission direction. Second, and more importantly, it enables a reorganization of link activity across bands, which reduces harmful interference coupling and leads to more efficient utilization of spectral resources. This additional flexibility is particularly beneficial in dense or heterogeneous deployments, where static duplexing configurations are inherently limiting.

\subsection{Related Work}
    	
 Early works largely focus on the coexistence between \cgls{gso} and \cgls{ngso} systems, consisting of \cgls{meo} and \cgls{leo}. For instance, methods such as range-based beam control, cognitive radio, and traffic-aware power allocation are proposed in~\cite{Pourmoghadas2017} to manage the in-line interference when \cgls{ngso} satellites cross the \cgls{los} between a \cgls{gso} satellite and its ground station. Other techniques, including exclusive-angle strategies \cite{Wang2018} and look-aside~\cite{Braun2019}, provide useful insights for interference management in low-density deployments. On the other hand, for dense satellite deployments, aggregated interference at the receiver becomes a challenge. Studies show that \cgls{gso} systems can suffer severely from \cgls{ngso} transmissions, highlighting the need for stronger interference mitigation strategies \cite{Wang2020, Jalali2024}.
	 
More recent studies shift attention to the critical problem of \cgls{ngso}-\cgls{ngso} coexistence, and in particular \cgls{leo}-\cgls{leo} interference, which becomes increasingly severe in dense constellations~\cite{Al-Hraishawi2023}. To avoid excessive interference between multiple adjacent~\cgls{ngso} satellites, joint optimization-based methods are proposed in~\cite{Al-Hraishawi2023}. Beam allocation approaches based on matching theory, aiming to maximize throughput while mitigating cross-constellation interference between two constellations, are proposed in~\cite{Zhang2024}; joint beamforming and satellite selection methods are developed to improve~\cgls{dl} performance and positioning accuracy simultaneously in~\cite{Li2024}. Long-term optimization frameworks that jointly consider beam direction, power, frequency, and time resources are also proposed to enhance spectrum efficiency in~\cite{Yuan2024}. At the constellation coordination level, satellite selection methods in~\cite{Kim2025a, Kim2025b} allow secondary systems to adaptively serve users while protecting primary systems from harmful interference, making in-band coexistence feasible. Joint transmit power and beam directivity control, enabled by digital beamforming, is also shown to be effective in reducing inter-beam interference~\cite{Choi2005, Xie2023, Takahashi2019}. Similarly, adjusting spot-beam sizes according to user demand provides additional flexibility for interference management~\cite{Lei2024}. Further, joint beamforming and power control strategies are investigated, often coupled with co-frequency exclusion zones to mitigate cross-system interference in \cite{Yin2021, Gu2022}. Additionally, beam-hopping methods, where different frequency bands are allocated to different beams of a multi-beam satellite, help reduce intra-beam interference, as shown in \cite{Jia2025, Tang2021}. To further enhance spectrum sharing efficiency, database-driven spectrum management is suggested for coexistence in dynamic operating environments~\cite{Höyhtyä2017}. These methods highlight the trend toward adaptive and flexible designs that dynamically respond to interference conditions in dense satellite deployments.
	
On the other hand, a significant body of work focuses on the interaction and coexistence between satellites and terrestrial networks. In such hybrid systems, various resource allocation mechanisms are proposed, such as~\cite{Ding2024}. A promising approach lies in spectrum pairing strategies, where normal and reverse pairing between terrestrial and \cgls{leo} systems enable more flexible utilization of \cgls{ul} and \cgls{dl} bands~\cite{Lee2023, Fu2023, Lee2024, Zhang2019}. In the normal pairing mode, both systems use the same band for \cgls{dl} and another shared band for \cgls{ul}, which leads to cross-interference. By contrast, reverse pairing assigns one band to the \cgls{dl} of one system and the \cgls{ul} of the other, and vice versa for the second band, thereby localizing interference more effectively~\cite{Zhang2019}. Notably, \cite{Zhang2019} applies this static pairing framework to \cgls{gso}-\cgls{ngso} coexistence and reports performance improvements in several scenarios. These prior works indicate that interference strongly depends on the operating frequency for both \cgls{dl} and \cgls{ul}. Further, due to the propagation characteristics of satellite communications, using different frequency bands can potentially reduce interference.

\subsection{Contributions}
This paper develops a spectrum-sharing method that introduces a new \cgls{dof} in the \cgls{fdd}-based multi-band satellite systems. Specifically, our contributions are summarized as follows:
\begin{itemize}
    \item We propose dynamic \cgls{fdd}, as a generalization of conventional static \cgls{fdd}, by enabling dynamic band allocation for \cgls{dl} and \cgls{ul}. This introduces an additional \cgls{dof} in the system, and improves interference management capabilities in spectrum-sharing satellite networks.
    
    \item This new \cgls{dof} is exploited in a joint user-satellite assignment and resource allocation framework for multi-satellite systems with overlapping coverage areas. The resulting problem is a highly non-convex mixed-integer optimization problem due to coupling between discrete band selection, scheduling, and continuous power allocation.
    
      \item To solve this problem, we develop a solution algorithm based on alternating optimization and quadratic fractional programing~\cite{Shen2018} techniques. Strong convergence guarantees to first-order optimal points are established. Our proof technique extends the methodological state-of-the-art in convergence analysis for quadratic transform-based algorithms and generalizes well to similar optimization problems.
      
     \item Numerical results demonstrate that the proposed approach achieves up to 30\% improvement in sum-rate in dense user-satellite scenarios. The gains become more pronounced as the number of user and satellite increases, highlighting the effectiveness of dynamic \cgls{fdd} over conventional static configurations.
\end{itemize}

A preliminary version of this work appeared in~\cite{Mukherjee}, where the core idea was introduced. This manuscript provides a more rigorous treatment, including insights on scalability and extensive numerical evaluations, demonstrating the effectiveness of dynamic \cgls{fdd} across diverse network configurations. For reproducibility, the simulations code is available at~\cite{Code2026}.

\subsection{Organization and Notations}
The rest of the paper is organized as follows: in Section~\ref{sec:2}, we discuss the system model for the constellation of satellites serving a common region, then we formulate a joint optimization problem for scheduling and frequency band selection along with transmit powers. Then, we discuss the solution approach using various transformations in Section~\ref{sec:3}. In Section~\ref{sec:4}, we present  numerical results showing the benefits of the proposed dynamic approach. Finally, in Section~\ref{sec:5} conclusions are drawn.
	
Throughout the paper,  bold lowercase and uppercase letters denote vectors and matrices, respectively; scalars use regular font. We consider a vector as a column, and the $\ell_2$-norm and absolute value are denoted by $\lVert \cdot \rVert$ and $\lvert \cdot \rvert$. The sets $\mathbb{R}, \mathbb{R}_+$, and $\mathbb{C}$ denote real, positive real, and complex numbers, respectively. The operation $[a_n]_{n=1}^N$ on a sequence $(a_n)_{n=1}^N$ produces a column vector. Further, the operation $[a_{mn}]_{m=1, n=1}^{m=M,n=N}$ also produces a vector by stacking elements $a_{mn}$, for all $m  = [1, \cdots, M]$ and $n = [1, \cdots, N]$. Transpose and conjugate of a vector are $(\cdot)^T$ and $(\cdot)^*$, respectively. An $m$-dimensional vector with complex entries is denoted as $\mathbb{C}^{m}$. The notation $\sim$ means ``distributed as”, and $\mathcal{CN}(\mu, \sigma^2)$ denotes a circularly symmetric complex Gaussian distribution with mean $\mu$  and  variance $\sigma^2$. Calligraphic uppercase letters, such as, $\mathcal{A}$ represent a set, and $\setminus$ denotes the set difference. For a binary variable ${a} \in \{0,1\}$, its complement $(1-a)$ is denoted as $\bar{a}$, and for a set $\mathcal{A}$, $\mathcal{A}_{-i} = \mathcal{A} \setminus \{i\}$ for $i \in \mathcal{A}$.

\section{System Model and Problem Formulation}
\label{sec:2}
	
Consider a system comprising $J$ \cgls{leo} satellites, each equipped with $N$ antennas, serving a region with a total of $K$ single-antenna \cglspl{ue}, as illustrated in Fig.~\ref{fig:2}. The satellites and \cglspl{ue} are indexed by the sets $\mathcal{J} = \{1, \cdots, J\}$ and $\mathcal{K} = \{1, \cdots, K\}$, respectively. Two frequency bands are available for \cgls{dl} and \cgls{ul} communications, where band $l$, $l \in \{1, 2\}$, has center frequency $\Omega_l$ and bandwidth $B_l$. Satellite $j \in \mathcal{J}$ serves a subset of the \cglspl{ue} on these bands. Each \cgls{ue} maintains separate connections for \cgls{dl} and \cgls{ul}, either to the same or to separate satellites. \cgls{ue} association to satellite~$j$ in the \cgls{dl} is indicated by a binary variable $d_{kj} \in \{0,1\}$, where $d_{kj} = 1$ means \cgls{ue} $k$ is served in \cgls{dl} by satellite~$j$, and $d_{kj} = 0$ otherwise. Similarly, \cgls{ul} association is indicated by a binary variable $u_{kj} \in \{0,1\}$, where $u_{kj} = 1$ means \cgls{ue} $k$ is served in \cgls{ul} by the satellite~$j$, and $u_{kj} = 0$ otherwise. Further, the selection of \cgls{dl} and \cgls{ul} frequency bands from the two available options is treated as a variable, which we refer to as the \textit{spin}. This idea is inspired by~\cite{PetarSpins}, where it was introduced for time-division duplex systems. The spin for satellite $j$ is represented by $r_j \in \{0,1\}$. If $r_j = 1$, band $l = 1$ is used for \cgls{dl} and band $l = 2$ for \cgls{ul}; if $r_j = 0$, the roles of the bands are reversed, as illustrated in Fig.~\ref{fig:3}. This is, essentially, an \cgls{fdd} system at satellite $j$ with improved interference management capabilities through flexible band assignment. Still, each transceiver has to use orthogonal frequency bands for \cgls{ul} and \cgls{dl} transmissions. At \cgls{ue} $k$, this can be enforced by the condition $\sum_{j \in \mathcal{J}} d_{kj} r_{j} =  \sum_{j \in \mathcal{J}} u_{kj} r_{j}$, which is only necessary if the \cgls{ue} is simultaneously served in \cgls{dl} and \cgls{ul}. Thus, for all $k$,
\begin{multline}
	\sum_{j \in \mathcal{J}} d_{kj} = 1 \quad \mathrm{and}\quad \sum_{j \in \mathcal{J}} u_{kj} = 1
	\\\implies
	\sum_{j \in \mathcal{J}} d_{kj} r_{j} = \sum_{j \in \mathcal{J}}  u_{kj} r_{j}.
	\label{eq:1}
\end{multline}
For all other cases, this condition is not relevant.

\begin{figure}[t]
	\centering
	\input{TikzFig/Fig2.tex}
	\caption{A system of $J$ \cgls{leo} satellites in orbit serving $K$ \cglspl{ue} over   region.}
	\label{fig:2}
\end{figure}
	
\begin{figure}
	\centering
			{\footnotesize
	\begin{tikzpicture}[yscale = 0.7,every path/.style={>=Latex}] 
		\tikzset{
			smallarrow/.style={
				-{Latex[length=2mm,width=1.2mm]},
				line width=0.3pt
			}
		}
		
		\draw[->, smallarrow,  thick] (1,0) -- (6.5,0) node[right]{Frequency};
		
		\filldraw[fill=myblue!50, draw=myblue, thick, rounded corners=2pt] (2,0.2) rectangle (3.5,0.7) node[midway]{{UL}};
		\filldraw[fill=myred!50, draw=myred, thick, rounded corners=2pt] (4,0.2) rectangle (5.5,0.7) node[midway]{{DL}};
		
		\filldraw[fill=myblue!50, draw=myblue, thick, rounded corners=2pt] (2,0.9) rectangle (3.5,1.4) node[midway]{{DL}};
		\filldraw[fill=myred!50, draw=myred, thick, rounded corners=2pt] (4,0.9) rectangle (5.5,1.4) node[midway]{{UL}};
		
		\node at (1,0.45) {$r_j = 1$};
		\node at (1,1.15) {$r_j = 0$};
		
		\draw[thick] (2.75,0) -- (2.75,0.2);
		\draw[ thick] (4.75,0) -- (4.75,0.2);
		
		\node[below] at (2.75,0) {$\Omega_2$};
		\node[below] at (4.75,0) {$\Omega_1$};
		
	\end{tikzpicture}
}
	\caption{Proposed dynamic \cgls{fdd} framework, where the two available frequency bands at each satellite can be flexibly assigned to either \cgls{dl} or \cgls{ul}. This additional \cgls{dof} is captured through a binary variable, spin $r_j$ for satellite $j$, whose value determines the corresponding band-direction mapping.}
	\label{fig:3}
\end{figure}
	
We denote the band-$l$ channel from satellite $j$ to \cgls{ue}~$k$ as $\bm h_{kjl} \in \mathbb{C}^{N}$. This channel has a dominant \cgls{los} component with only small non-\cgls{los} effects in the ground-segment. The \cglspl{ue} are assumed to know their own position and those of all satellites. In particular, let $m_{kj}$, $\theta_{kj}$, and $\phi_{kj}$ be the distance, azimuth, and elevation angle from satellite $j$ to \cgls{ue} $k$. Then, the \cgls{los} component of 
\begin{align}
	\bm h_{kjl} = 	\sqrt{\beta_{kjl}}\, \bm b_l(\theta_{kj}, \phi_{kj}),
	\label{eq:2}
\end{align}
with path loss $\beta_{kjl} = \left( {c}/{4 \pi m_{kj} \Omega_l}\right)^2$ and array-response vector \cite{Roper2024} is
\begin{equation}
	\bm b_l = 
	\left[ \exp\left(-\frac{\sqrt{-1} \,2\pi \Omega_l}{c}\left( D^x_n \psi^x + D^y_n \psi^y\right)\right) \right]_{n=1}^N,
	\label{eq:3}
\end{equation}
where $\psi^x = \cos{( \phi_{kj} )}\cos{(\theta_{{kj}})}, \psi^y = \cos{( \phi_{kj} )}\sin{(\theta_{kj})}$, $D^x_n, D_n^y$ denote the $x$ and $y$-coordinate, respectively of $n$-th antenna element, and  $c$ is the speed-of-light. We model $\bm h_{kjl}$ as purely \cgls{los} but note that statistical \cgls{csi} is straightforward to incorporate into $\beta_{kjl}$. Further, we assume negligible interference between satellites due to the high directivity and downward positioning of the satellite antenna array. However, we need to take inter-\cgls{ue} interference into account. Since accurate and instantaneous \cgls{ue}-to-\cgls{ue} \cgls{csi} is infeasible to obtain, we model the band-$l$ channel $g_{kk'l} = g_{k'kl}$ from \cgls{ue} $k$ to $k'$ based on position-dependent long-term statistics.

Now, consider that \cgls{ue}~$k$ is served by satellite~$j$ in the \cgls{dl}, indicated by the binary variable $d_{kj} = 1$. The effective \cgls{dl} interference channel from satellite~$j' \in \mathcal{J}$ to $k$, while serving another \cgls{ue}~$k' \in \mathcal{K}_{-k}$ in the \cgls{dl} (i.e., $d_{k'j'}=1$), is given by
\begin{align}
	\left(\bm \gamma^{\mathrm{dl}}_{kjj'} \right)^T &= \overline{\lvert r_j - r_{j'} \rvert} \,  \Big[ r_{j'} \, \bm h_{kj',l=1}^T  + \overline{r_{j'}} \, \bm h_{kj',l=2}^T \Big], 
	\label{eq:4}
\end{align}
where $r_j$ and $r_{j'}$ denote the spins of satellites~$j$ and~$j'$, respectively, and $\lvert r_j - r_{j'} \rvert$ quantifies their relative spin. Interference in the \cgls{dl} arises from another satellite~$j'$ only when both satellites operate with the same spin, which is explicitly captured in the above formulation. Moreover, $\bm \gamma^{\mathrm{dl}}_{kjj'}$ also accounts for the direct \cgls{dl} channel between satellite~$j$ and \cgls{ue}~$k$ in the special case $j' = j$, since $\overline{\lvert r_j - r_j \rvert} = 1$. 
Similarly, if \cgls{ue}~$k$ is served by satellite~$j$ in the \cgls{ul}, indicated by $u_{kj} = 1$, then the effective \cgls{ul} interference channel from \cgls{ue}~$k' \in \mathcal{K}$ to satellite~$j$, where $k'$ is served by satellite~$j' \in \mathcal{J}$ in the \cgls{ul} (i.e., $u_{k'j'} = 1$), is given by
\begin{align}
\bm \gamma^{\mathrm{ul}}_{k'jj'} &= \overline{\lvert r_j - r_{j'} \rvert} \, \Big[ r_{j'} \, \bm h_{k'j,l=2} + \overline{r_{j'}} \, \bm h_{k'j,l=1} \Big].
	\label{eq:5}
\end{align}
Note that $\bm \gamma^{\mathrm{ul}}_{k'jj'}$ also captures the direct \cgls{ul} channel from \cgls{ue}~$k$ to satellite~$j$ when $k' = k$ and $j' = j$. This interference arises when both satellites operate with the same spin. 
Additionally, a \cgls{ue} causes interference to another \cgls{ue} during \cgls{ul} transmissions, if the spin of the associated satellites $j$ and $j'$ differs. Suppose \cgls{ue}~$k$ is served in the \cgls{dl} by satellite~$j$ (i.e., $d_{kj} = 1$), while \cgls{ue}~$k' \in \mathcal{K}_{-k}$ is served in the \cgls{ul} by satellite~$j' \in \mathcal{J}_{-j}$ (i.e., $u_{k'j'} = 1$). Then, the effective interference channel from \cgls{ue}~$k'$ to \cgls{ue}~$k$ is given by
\begin{align}
	\nu_{kk'jj'} &= \lvert r_j - r_{j'} \rvert \, \Big[ r_{j'} \, g_{k'k,l=1} + \overline{r_{j'}} \, g_{k'k,l=2} \Big],
	\label{eq:6}
\end{align}
where the effective channels $\bm{\gamma}^{\mathrm{dl}}_{kjj'}, \bm{\gamma}^{\mathrm{ul}}_{k'jj'},  \text{ and } \nu_{kk'jj'}$ are selected by the spin variables $r_j$ and $r_{j'}$.

Note that we assume the positions of all \cglspl{ue} are available at the satellites. This is considering \gls{gnss}-capable devices that are compliant with 3GPP standards~\cite{3gpp}. Given accurate position information, the satellites can reliably estimate the \cgls{ue}-satellite channel based on a \cgls{los} propagation model, which is a reasonable assumption for LEO satellite links. Obtaining \cgls{ue}-\cgls{ue}  \cgls{csi} at the satellite is challenging due to signaling and latency constraints. Therefore, we utilize the \cgls{ue} positions to model these channels as \cgls{los}. This corresponds to a conservative worst-case assumption, as it captures the strongest plausible interference conditions, enabling the proposed optimization framework to be implemented in a practical manner.
	
	\begin{figure*}[b]
		\normalsize
		\setcounter{MYtempeqncnt}{\value{equation}}
		\setcounter{equation}{9}
		\hrulefill 
		\vspace{0.1cm}
		\begin{minipage}{\textwidth}
			\begin{align}
				f_0&(\bm d, \bm r, \bm u, \bm p^{\text{dl}}, \bm p^{\text{ul}}) =   \sum_{k \in \mathcal{K}} \sum_{j \in\mathcal{J}} \Big[ \log_2 \Big(1 +  u_{kj} \, p_{k}^{\text{ul}}  \left\lVert \bm \gamma^{\mathrm{ul}}_{kjj} \right\rVert^2 	 \Big/  \Big( \sigma^2 +  \sum_{j' \in \mathcal{J}} \sum_{k' \in \mathcal{K}_{-k}} u_{k'j'} \, p_{k'}^{\text{ul}} \, \left\lvert \bm v_{kj}^T \bm \gamma^{\mathrm{ul}}_{k'jj'}\right\rvert^2   \Big) \Big)	\nonumber\\
				&+ \log_2 \Big( 1 + d_{kj}\, p_{k}^{\text{dl}} \, \left\lVert \bm \gamma^{\mathrm{dl}}_{kjj}  \right\rVert^2\Big/ \Big( \sigma^2 +  \sum_{j' \in \mathcal{J}} \sum_{k' \in \mathcal{K}_{-k}} d_{k'j'}\, p_{k'}^{\text{dl}} \, \left\lvert ( \bm \gamma^{\mathrm{dl}}_{kjj'} )^T  \bm w_{k'j'}\right\rvert^2    + \sum_{j' \in \mathcal{J}_{-j} }\sum_{k' \in \mathcal{K}_{-k}}  u_{k'j'} \, p_{k'}^{\text{ul}}\, \lvert \nu_{kk'jj'}\rvert^2\Big) \Big) \Big].  \label{eq:10} \\
				f_1&(\bm d, \bm r, \bm u, \bm p^{\text{dl}}, \bm p^{\text{ul}}, \,\bm \chi^{\text{dl}} , \bm \chi^{\text{ul}}  ) = \nonumber \\
				&  \sum_{k \in \mathcal{K}} \sum_{j\in \mathcal{J}} \Big[ \log_2 \left(1+  	\chi^{\text{dl}}_{kj} \right)	+ \log_2 \left( 1 + \chi^{\text{ul}}_{kj} \right) - \chi^{\text{dl}}_{kj}  - \chi^{\text{ul}}_{kj}  + \left(1+  \chi^{\text{ul}}_{kj} \right) u_{kj} \, p_{k}^{\text{ul}}  \left\lVert \bm \gamma^{\text{ul}}_{kjj}  \right\rVert^2 	 \Big/  \Big( \sigma^2 +  \sum_{j' \in \mathcal{J}} \sum_{k' \in \mathcal{K}} u_{k'j'} \, p_{k'}^{\text{ul}}  \left\lvert \bm v^T_{kj} \bm \gamma^{\text{ul}}_{k'jj'} \right\rvert^2   \Big) \nonumber \\
				&\hspace{15pt}+ \left(1+  \chi^{\text{dl}}_{kj} \right) d_{kj\, } p_{k}^{\text{dl}} \, \left\lVert\bm \gamma^{\text{dl}}_{kjj} \right\rVert^2 \Big/ \Big( \sigma^2 +  \sum_{j' \in \mathcal{J}} \sum_{k' \in \mathcal{K}} d_{k'j'\, } p_{k'}^{\text{dl}} \, \left\lvert (\bm \gamma^{\text{dl}}_{kjj'})^T \bm w_{k'j'}\right\rvert^2    + \sum_{j' \in \mathcal{J}_{-j}} \sum_{k' \in \mathcal{K}_{-k} }  u_{k'j'} \, p_{k'}^{\text{ul}} \,\lvert \nu_{kk'jj'}\rvert^2\Big)\Big]. \label{eq:11} \\
				f_2&(\bm d, \bm r, \bm u, \bm p^{\text{dl}}, \bm p^{\text{ul}}, \,\bm \chi^{\text{dl}} , \bm \chi^{\text{ul}}, \bm \xi^{\text{dl}}, \bm \xi^{\text{ul}} ) = \nonumber\\
				&  \sum_{j\in \mathcal{J}} \sum_{k \in \mathcal{K}} \Big[\log_2 \left(1+  	\chi^{\text{dl}}_{kj} \right)	+ \log_2 \left( 1 + \chi^{\text{ul}}_{kj} \right) - \chi^{\text{dl}}_{kj}  - \chi^{\text{ul}}_{kj} 
				+  2 \xi_{kj}^{\text{dl}}  d_{kj }\sqrt{(1+  	\chi^{\text{dl}}_{kj}) p_{k}^{\text{dl}}}  \left\lVert\bm \gamma^{\text{dl}}_{kjj}\right\rVert + 2 \xi_{kj}^{\text{ul}} \, u_{kj }\sqrt{(1+  	\chi^{\text{ul}}_{kj} ) p_{k}^{\text{ul}}} \, \left\lVert\bm \gamma^{\text{ul}}_{kjj}  \right\rVert   \nonumber \\
				& \hspace{50pt}-   \left(\xi_{kj}^{\text{dl}}\right)^2 \Big(\sigma^2 +  \sum_{j'\in \mathcal{J}} \sum_{k' \in \mathcal{K}}  d_{k'j'\, } p_{k'}^{\text{dl}} \, \left\lvert (\bm \gamma^{\text{dl}}_{kjj'})^T \bm w_{k'j'} \right\rvert^2  +  \sum_{j' \in \mathcal{J}_{-j}} \sum_{k' \in \mathcal{K}_{-k}}  u_{k'j'} \, p_{k'}^{\text{ul}} \,\lvert \nu_{kk'jj'} \rvert^2 \Big) \nonumber \\
				& \hspace{150pt}-   \left(\xi_{kj}^{\text{ul}} \right)^2 \Big( \sigma^2 +   {\sum_{j' \in \mathcal{J}} \sum_{k' \in \mathcal{K}} u_{k'j'}  p_{k'}^{\text{ul}}  \left\lvert \bm v^T_{kj} \bm \gamma^{\text{ul}}_{k'jj'} \right\rvert^2} \Big)\Big]. \label{eq:12} 
			\end{align} 
			
		\end{minipage}
		\setcounter{equation}{\value{MYtempeqncnt}}
		\hrulefill
		\vspace*{4pt}
	\end{figure*}

The received signals at \cgls{ue}~$k$ in the \cgls{dl} and at satellite~$j$ in the \cgls{ul}, corresponding to link~$kj$, are
\begin{subequations}\label{eq:7}
\begin{align}
	y_{kj}^{\text{dl}} =&\, d_{kj} \, \sqrt{p_{k}^{\text{dl}}} \, \lVert \bm \gamma^{\mathrm{dl}}_{kjj} \rVert \, s^{\text{dl}}_{k}  \nonumber\\
	&+ \sum_{j' \in \mathcal{J}} \sum_{k' \in \mathcal{K}_{-k}} d_{k'j'} \, \sqrt{p_{k'}^{\text{dl}}} \, \left( \bm \gamma^{\mathrm{dl}}_{kjj'} \right)^T  \bm w_{k'j'} \, s^{\text{dl}}_{k'} \nonumber\\
	&+ \sum_{j' \in \mathcal{J}_{-j}} \sum_{k' \in \mathcal{K}_{-k}} u_{k'j'} \, \sqrt{p_{k'}^{\text{ul}}} \, \nu_{kk'jj'} \, s^{\text{ul}}_{k'} + n^{\mathrm{dl}}_{kj},
	\label{eq:7a} \\
	y^{\text{ul}}_{kj} =&\, u_{kj} \sqrt{p_{k}^{\text{ul}}} \, \lVert \bm \gamma^{\mathrm{ul}}_{kjj} \rVert \, s^{\text{ul}}_{k} \nonumber \\
	&+ \sum_{j' \in \mathcal{J}} \sum_{k' \in \mathcal{K}_{-k}} u_{k'j'} \, \sqrt{p_{k'}^{\text{ul}}} \, \bm v_{kj}^T \bm \gamma^{\mathrm{ul}}_{k'jj'} \, s^{\text{ul}}_{k'} + n^{\mathrm{ul}}_{kj},
	\label{eq:7b}
\end{align}
\end{subequations}
respectively, where  $s_{k'}^{\mathrm{dl}}$ denotes the unit-energy symbol transmitted in the \cgls{dl} towards \cgls{ue}~$k'$, $s_{k'}^{\mathrm{ul}}$ represents the unit-energy symbol transmitted by \cgls{ue}~$k'$ in the \cgls{ul}. The receiver noise terms $n_{kj}^{\mathrm{dl}}$ for the \cgls{dl} and $n_{kj}^{\mathrm{ul}}$ for the \cgls{ul} are modeled as circularly-symmetric complex Gaussian random variables with variance $\sigma^2$, i.e., $n_{kj}^{\mathrm{dl}}, n_{kj}^{\mathrm{ul}} \sim \mathcal{CN}(0, \sigma^2)$. The transmit power allocated to user $k'$ in the \cgls{dl} and \cgls{ul} is denoted by $p_{k'}^{\mathrm{dl}}$ and $p_{k'}^{\mathrm{ul}}$, respectively.
Further, $\bm w_{k'j'} \in \mathbb{C}^{N }$ denotes the unit-norm \cgls{dl} \gls{mrt} precoder for link $k'j'$, and the vector $\bm v_{kj} \in \mathbb{C}^{N}$ is the unit-norm \gls{mrc}  at satellite~$j$ for \cgls{ue}~$k$ in the \cgls{ul}. In particular, 
\begin{subequations}\label{eq:8}
\begin{align}
	\bm w_{kj} &= \begin{cases}
		\bm h_{kj, l=1}^* \Big/\lVert\bm h_{kj, l=1} \rVert,	 & \text{if } r_j = 1 \\
		\bm h_{kj, l=2}^* \Big/\lVert\bm h_{kj, l=2} \rVert,	 & \text{if } r_j = 0
	\end{cases} \label{eq:8a}\\
	\bm v_{kj} &= \begin{cases}
		\bm h_{kj, l=2}^* \Big/\lVert\bm h_{kj, l=2} \rVert,	 & \text{if } r_j = 1 \\
		\bm h_{kj, l=1}^* \Big/\lVert\bm h_{kj, l=1} \rVert.	 & \text{if } r_j = 0
	\end{cases} \label{eq:8b}							
\end{align}
\end{subequations}
With this system model, we develop the corresponding resource allocation problem for dynamic \cgls{fdd} next.

\begin{figure}[t]
	\centering
	\usetikzlibrary{arrows.meta, positioning, fit}	
\tikzset{
	problem/.style={rectangle, draw, rounded corners,
		minimum width=1.2cm, minimum height=0.5cm,
		align=center},
	stepbox/.style={rectangle, draw, rounded corners,
		align=center, inner sep=2pt,
		minimum width=1.9cm, minimum height=0.6cm},
	arrow/.style={-{Latex}, thick},
	doublearrow/.style={<->, thick, >=Latex
	},
	bigbox/.style={draw, rounded corners, thick,
		inner sep=0.3cm, dashed}
}

{\scriptsize
	\begin{tikzpicture}[xscale=1.8, node distance=1.8cm]
		
		\node[problem] (P0) {\eqref{eq:9}};
		\node[problem, right=of P0] (P1) {\eqref{eq:13}};
		\node[problem, right=of P1, minimum width=1cm] (P2) {\eqref{eq:14}};
		
		\draw[doublearrow, line cap=round]
		(P0) -- (P1)
		node[midway, above=5pt, sloped, align=center]
		{Lagrangian dual \\ transform};
		
		\draw[doublearrow, line cap=round]
		(P1) -- (P2)
		node[midway, above=5pt, sloped, align=center]
		{Quadratic \\ transform };
		
		\node[bigbox, below=.6cm of P2, xshift=-3cm,
		minimum width=6.8cm, minimum height=2.8cm] (iterBox) {};
		
		\node[above left=.2cm and -3.9cm of iterBox.north west, anchor=west]
		{alternating optimization};
		
		\draw[->, thick, >=Latex] (P2.south) -- (iterBox.north west);
		\draw[->, thick, >=Latex] (P2.south) -- (iterBox.north east);
		
		\node[stepbox] (box1) at ($(iterBox.north)+(.85,-0.4cm)$)
		{initial solution};
		\node[stepbox] (box2) at ($(iterBox.north)+(.85,-1.4cm)$)
		{auxiliary \\ variable update};
		\node[stepbox] (box3) at ($(iterBox.north)+(.85,-2.4cm)$)
		{jointly update $(\bm d, \bm u, \bm p^{\mathrm{dl}},  \bm p^{\mathrm{ul}})$\\   by solving \eqref{eq:17}};
		
		\draw[->, thick, >=Latex] (box1.south) -- (box2.north);
		\draw[->, thick, >=Latex] (box2.south) -- (box3.north);
		
		\draw[->, thick, >=Latex]
		(box3.east) to[out=70, in=0] (box2.east);
		
		\node[problem] (box4) at ($(box3.west)+(-1.4,0)$) {solve \eqref{eq:25}};
		
		\draw[<->, thick, >=Latex] (box3.west) -- (box4.east) 	node[midway, above=5pt, sloped, align=center]
		{linearized\\constraints\\(big-$M$)};
		
		\node[anchor=north west, align=left]
		at ([xshift=4pt,yshift=-4pt]iterBox.north west)
		{\text{For each } $\boldsymbol r \in \{0,1\}^J$};
		
	\end{tikzpicture}
}	
	\caption{Illustration of sequence of reformulations to solve \eqref{eq:9}.}
	\label{fig:10}
\end{figure}

\subsection{Resource Allocation for Dynamic FDD}
\label{sec:2a}

To efficiently allocate radio resources in the system, we formulate a joint optimization problem that considers user association, power control, and flexible band assignment (spin) across satellites. The ultimate goal is to enhance spectral efficiency while respecting practical system constraints. Specifically, we aim to maximize the overall two-way communication rate between satellites and \cglspl{ue}, which captures the sum of both \cgls{dl} and \cgls{ul} rates. In particular, we solve
\begin{subequations}\label{eq:9}
	\begin{align}
		\underset{\bm d, \bm r, \bm u, \bm p^{\text{dl}}, \bm p^{\text{ul}}}{\text{max}} &	\, \, \, \, f_0(\bm d, \bm r, \bm u, \bm p^{\text{dl}}, \bm p^{\text{ul}}), \label{eq:9a}\\
		\text{s.t. \hspace{1pt}	}
		& \text{FDD condition in }\eqref{eq:1}, \text{ for all } k,  \label{eq:9b} \\
		& \sum_{j \in \mathcal{J}} d_{kj} \leq 1, \sum_{j \in \mathcal{J}} u_{kj} \leq 1,  \text{ for all } k,\label{eq:9c}\\
		& \sum_{k \in \mathcal{K}} d_{kj} \, p_{k}^{\text{dl}} \leq p_{j}^{\text{max}}, \text{ for all } j,\label{eq:9d} \\
		& p_{k}^{\text{ul}} \leq p_{k}^{\text{max}},  \text{ for all } k, \label{eq:9e} \\
		& p_{k}^{\text{dl}} \geq 0, \, p_{k}^{\text{ul}} \geq 0, \text{ for all } k, \label{eq:9f} \\
		& d_{kj}, r_{j}, u_{kj} \in \{0,1\},  \text{ for all } k \text{ and } j.\label{eq:9g} 
	\end{align}
\end{subequations}
Here, the variables are grouped as $\bm d = [d_{kj}]_{k=1, j=1}^{k= K, j = J}$,  $\bm u = [u_{kj}]_{k=1, j=1}^{k= K, j = J}$, $\bm r = [r_j]_{j=1}^J$, $\bm p^{\text{dl}} = [p_{k}^{\text{dl}}]_{k=1}^K$, and $\bm p^{\text{ul}} = [p_{k}^{\text{ul}}]_{k=1}^K$.
The objective $f_0$ is given in~\eqref{eq:10} at the bottom of the page.  Constraint \eqref{eq:9b} captures the fact that each frequency band can be used for either transmission or reception but not both. Constraint~\eqref{eq:9c} restricts each UE to connect
with at most one satellite in DL and at most one in UL. Power constraint \eqref{eq:9d} limits the total transmit power used by satellite~$j$ for serving its scheduled \cglspl{ue} in the \cgls{dl} to stay within its maximum power budget $p_j^{\text{max}}$.
Likewise, constraint \eqref{eq:9e} restricts each \cgls{ue}~$k$ from exceeding its own transmit power limit $p_k^{\text{max}}$ in the \cgls{ul}. The constraint \eqref{eq:9f} ensures all power values are non-negative, and \eqref{eq:9g} enforces the binary domain of the scheduling and spin variables.

\section{Solution to the Mathematical Program \eqref{eq:9}}
\label{sec:3}

The optimization problem in \eqref{eq:9} is highly non-convex due to the interference-limited sum-rate expression and the binary decision variables~\cite{Luo2008}. Notably, even if the binary variables are relaxed to continuous surrogates, the problem remains non-convex due to the fractional and interference-coupled structure of the \cgls{sinr}, and known to be NP-hard~\cite{Luo2008}. As a result, computing a global optimum with reasonable computational effort is intractable for practically relevant system dimensions. Instead of pursuing global optimality at prohibitive complexity, our objective is to design a structured solution approach that guarantees feasibility and achieves high-quality performance with manageable computational cost. To this end, we reformulate the original problem into a sequence of more tractable problems. This reformulation is carefully constructed to exploit existing, well-established \gls{minlp} solvers for specific sub-structures of the problem. By embedding these solvers within our algorithmic framework, we effectively reduce computational burden while maintaining strong solution quality.

We apply a sequence of transformations, summarized in Fig.~\ref{fig:10}, to obtain a structure suitable for alternating optimization. First, the Lagrangian dual transform \cite{Shen2018} is applied to decouple the logarithmic rate expressions from the fractional \cgls{sinr} terms through the introduction of auxiliary variables. This reformulation moves the \cgls{sinr} terms outside the logarithms and removes the direct coupling between logarithmic and fractional components in the objective. For fixed values of the remaining variables, the resulting objective becomes strictly convex with respect to the introduced auxiliary variables, allowing their unique and optimal update within an alternating optimization framework. Subsequently, the quadratic transform \cite{Shen2018} is applied to the fractional components. By introducing additional auxiliary variables, each term is converted into an equivalent separable quadratic form. Similar to the previous step, the transformed objective is strictly convex in these auxiliary variables when other variables are fixed, which guarantees a unique optimal update. These properties collectively yield a block-structured formulation in which each auxiliary block can be solved to global optimality. The remaining non-convexity arises from the discrete vector $\bm r$. To address this, we perform an exhaustive search over all feasible realizations of $\bm r$. For a fixed $\bm r$, the joint update of $(\bm d, \bm u, \bm p^{\mathrm{dl}}, \bm p^{\mathrm{ul}})$ is formulated as a \gls{minlp}. The formulated problem contains disjunctive constraints due to the presence of binary-continuous products, and we use the Big-$M$ method \cite{Glover1975, Dejans2025}, to handle those. This enables the use of standard \cgls{minlp} solvers for the joint update. By combining convex block-wise updates with solver-based mixed-integer optimization, the proposed framework maintains computational tractability while achieving high-quality feasible solutions.

\subsection{Equivalence Transforms}

First, we adapt the Lagrangian dual transform from Theorem~3 of \cite{Shen2018} to transform the problem~$\eqref{eq:9}$ into
	\setcounter{MYtempeqncnt}{\value{equation}}
	\setcounter{equation}{12}
\begin{subequations}\label{eq:13}
	\begin{align}
		\underset{
			\substack{
				\bm d, \bm r, \bm u, \bm p^{\text{dl}}, \bm p^{\text{ul}},
				\bm \chi^{\text{dl}}, \bm \chi^{\text{ul}}
			}
		}{\text{max}} &	\, \, \, \, f_1\left(\bm d, \bm r, \bm u, \bm p^{\text{dl}}, \bm p^{\text{ul}}, \bm \chi^{\text{dl}} , \bm \chi^{\text{ul}}  \right) \label{eq:13a} \\
		\text{s.t. }&
		\eqref{eq:9b}-\eqref{eq:9g}, \\
		& \chi_{kj}^{\mathrm{dl}}\in \mathbb{R}_+ , \chi_{kj}^{\mathrm{ul}} \in \mathbb{R}_+
	\end{align}
\end{subequations}
where we introduce auxiliary variables $\bm \chi^{\text{dl}} = [\chi^{\text{dl}}_{kj}]_{k=1, j=1}^{k= K, j = J}$ for the \cgls{dl}, and $\bm \chi^{\text{ul}} = [\chi^{\text{ul}}_{kj}]_{k=1, j=1}^{k= K, j = J}$ for the \cgls{ul}. The objective $f_1$ is shown at  bottom of the previous page in \eqref{eq:11}.

Next, we adapt the quadratic transform from Theorem~1 of \cite{Shen2018} to convert the ratios of sum in $\eqref{eq:11}$ to the sum of terms, obtaining
\begin{subequations}\label{eq:14}
	\begin{align}
		\underset{ \substack{\bm d, \bm r, \bm u, \bm p^{\text{dl}}, \bm p^{\text{ul}}, \\ \bm \chi^{\text{dl}} , \bm \chi^{\text{ul}}, \bm \xi^{\text{dl}}, \bm \xi^{\text{ul}} }}{\text{max}}&	\, \, \, \, f_2\left(\bm d, \bm r, \bm u, \bm p^{\text{dl}}, \bm p^{\text{ul}}, \bm \chi^{\text{dl}} , \bm \chi^{\text{ul}}, \bm \xi^{\text{dl}}, \bm \xi^{\text{ul}} \right) \label{eq:12a} \\
		\text{s.t. }&
		\eqref{eq:9b}-\eqref{eq:9g},  \\
		&   \chi_{kj}^{\mathrm{dl}}, \chi_{kj}^{\mathrm{ul}}  \in \mathbb{R}_+,\quad 
		\xi_{kj}^{\mathrm{dl}}, \xi_{kj}^{\mathrm{ul}}  \in \mathbb{R},
	\end{align}
\end{subequations}
where the auxiliary variables are $\bm \xi^{\text{dl}} = [\xi^{\text{dl}}_{kj}]_{k=1, j=1}^{k= K, j = J}$ and $\bm \xi^{\text{ul}} = [\xi^{\text{ul}}_{kj}]_{k=1, j=1}^{k= K, j = J}$. Further, the objective $f_2$ is shown at the bottom of previous page in \eqref{eq:12}. The equivalence between the problems $\eqref{eq:9}$, $\eqref{eq:13}$, and $\eqref{eq:14}$ is established in the following proposition:
\begin{proposition}
	\label{prop:1}
		    Let $(
				\bm d^\star, \bm r^\star, 
				\bm u^\star, 
				\bm p^{\mathrm{dl}\star}, \bm p^{\mathrm{ul}\star}, \;
				\bm \chi^{\mathrm{dl}\star}, \bm \chi^{\mathrm{ul}\star}, \bm \xi^{\mathrm{dl}\star},\allowbreak \bm \xi^{\mathrm{ul}\star} 
				)$
  be a solution to \eqref{eq:14}. Then, $(
		\bm d^\star,\allowbreak \bm r^\star, \allowbreak
		\bm u^\star, \allowbreak
		\bm p^{\mathrm{dl}\star},\allowbreak \bm p^{\mathrm{ul}\star},\allowbreak \;
		\bm \chi^{\mathrm{dl}\star}, \allowbreak \bm \chi^{\mathrm{ul}\star} 
		)$
		solves \eqref{eq:13}. Moreover, if $(
			\bm d^\dagger, \bm r^\dagger, \allowbreak
			\bm u^\dagger, \allowbreak
			\bm p^{\mathrm{dl}\dagger}, \allowbreak\bm p^{\mathrm{ul}\dagger}, \;
			\bm \chi^{\mathrm{dl}\dagger},\allowbreak \bm \chi^{\mathrm{ul}\dagger} 
			)$
		is a solution of \eqref{eq:13}, then $(
		\bm d^\dagger, \bm r^\dagger, 
		\bm u^\dagger, 
		\bm p^{\mathrm{dl}\dagger}, \bm p^{\mathrm{ul}\dagger}
		)$
		solves \eqref{eq:9}.
\end{proposition}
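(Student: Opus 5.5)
The plan is to use the standard partial-maximization argument behind the transforms of \cite{Shen2018}. Fix the primal variables $\bm x = (\bm d, \bm r, \bm u, \bm p^{\mathrm{dl}}, \bm p^{\mathrm{ul}})$ feasible for \eqref{eq:9b}--\eqref{eq:9g}. I would then compute the maximum of $f_2$ over $(\bm\xi^{\mathrm{dl}},\bm\xi^{\mathrm{ul}})$ and the maximum of $f_1$ over $(\bm\chi^{\mathrm{dl}},\bm\chi^{\mathrm{ul}})$ in closed form, and show
\[
\max_{\bm\xi} f_2(\bm x,\bm\chi,\bm\xi)=f_1(\bm x,\bm\chi), \qquad \max_{\bm\chi\ge 0} f_1(\bm x,\bm\chi)=f_0(\bm x).
\]
The feasible set of each problem is a product of the primal feasible set and an unconstrained (or orthant) set for the auxiliary variables. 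Hence $\max_{\bm x,\bm\chi,\bm\xi} f_2=\max_{\bm x,\bm\chi}\max_{\bm\xi} f_2$, and likewise one level down. A global maximizer of the larger problem must therefore have its $\bm\xi$-component (resp.\ $\bm\chi$-component) optimal for its own primal part. Its projection then attains the optimal value of the smaller problem, which proves both implications.

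\textbf{Step 1 ($\bm\xi$-block).} For fixed $(\bm x,\bm\chi)$, $f_2$ separates over $(k,j)$ and over DL/UL. Each piece has the form $2\xi\sqrt{A}-\xi^2 B$ with $B\ge\sigma^2>0$ and $A\ge 0$. Here $A=(1+\chi_{kj})\,d_{kj}p_k\lVert\bm\gamma_{kjj}\rVert^2$ (using $d_{kj}^2=d_{kj}$, since $d_{kj}$ is binary), and $B$ is the bracketed denominator. This is strictly concave in $\xi\in\mathbb R$. Its unique maximizer is $\xi=\sqrt A/B$, with value $A/B$, which is exactly the corresponding fractional term in $f_1$. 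This gives $\max_{\bm\xi}f_2=f_1$.

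\textbf{Step 2 ($\bm\chi$-block).} For fixed $\bm x$, each DL or UL summand of $f_1$ reads $\log(1+\chi)-\chi+(1+\chi)\,a/(a+c)$. Here $a$ is the desired-signal power and $c$ is noise plus interference; $c$ excludes the $k'=k$ term, because the denominators in \eqref{eq:11} include $k'=k$ whereas those in \eqref{eq:10} do not. The function is strictly concave in $\chi\ge 0$. Setting the derivative to zero gives $1/(1+\chi)=1-a/(a+c)$, so $\chi=a/c$, which is the SINR. Substituting back, $-\chi+(1+\chi)\,a/(a+c)=0$, and the summand equals $\log(1+\mathrm{SINR})$. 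Summing over all terms recovers $f_0$. The case $d_{kj}=0$ or $u_{kj}=0$ gives $a=0$, hence $\chi=0$, and is consistent with this.

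\textbf{Main obstacle.} I expect the hard part to be bookkeeping rather than conceptual. First, I must check carefully that the denominators of \eqref{eq:11}--\eqref{eq:12} equal the SINR denominators of \eqref{eq:10} plus exactly the own-signal term. This requires using the MRT/MRC definitions \eqref{eq:8}, which give $\lvert(\bm\gamma^{\mathrm{dl}}_{kjj})^T\bm w_{kj}\rvert=\lVert\bm\gamma^{\mathrm{dl}}_{kjj}\rVert$ when $d_{kj}=1$. Second, the logarithm base must be handled. The stationarity condition $\chi=\mathrm{SINR}$ holds for the natural logarithm. With $\log_2$ I would either work in nats and rescale, or weight the linear terms by $1/\ln 2$. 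I would note explicitly that this leaves the argmax unchanged, so the equivalence of solution sets in Proposition~\ref{prop:1} is unaffected.
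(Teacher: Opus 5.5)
Your route is the paper's. Appendix~\ref{appendix:1} also argues that maximizing $f_2$ over $\bm\xi$ returns $f_1$, that maximizing $f_1$ over $\bm\chi$ returns $f_0$, and then projects solutions. You supply more than the appendix does: the explicit projection inequalities, the $d_{kj}^2=d_{kj}$ step, and the $k'=k$ bookkeeping, which the appendix covers with ``similarly''. On that last point, it is \eqref{eq:9c} together with \eqref{eq:8} that makes the extra $k'=k$ contribution equal to $a$ when $d_{kj}=1$. When $d_{kj}=0$ that contribution can be nonzero, coming from another satellite with the same spin, but the fractional term vanishes anyway.

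The one step to repair is the base-2 issue. Your remedy of passing to nats, or reweighting the linear terms by $1/\ln 2$, changes $f_1$ and $f_2$. As stated, it therefore proves the result for modified versions of \eqref{eq:13}--\eqref{eq:14}. Your claim that the argmax is unchanged is asserted rather than shown.

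No modification is needed. With $\log_2$ and unit weights, each summand $\log_2(1+\chi)-\chi+(1+\chi)\,a/(a+c)$ is maximized over $\chi\ge 0$ at $\chi=(1+\mathrm{SINR})/\ln 2-1$. This point is strictly positive because $\ln 2<1$, so it is feasible. The maximum value is $\log_2(1+\mathrm{SINR})+1-1/\ln 2-\log_2\ln 2$, and the summands with $a=0$ give the same constant. Hence $\max_{\bm\chi\ge 0} f_1 = f_0 + 2KJ\,(1-1/\ln 2-\log_2\ln 2)$, a shift independent of the primal variables. Your projection argument then yields the second implication for the problems exactly as written. The first implication never touches the logarithmic terms, so it is base-independent.

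Incidentally, this shows that the closed form $\chi^\star=\mathrm{SINR}$ in \eqref{eq:15} is the natural-log maximizer, a point the paper's proof passes over. The proposition itself is unaffected.
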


\begin{IEEEproof}
	See Appendix~\ref{appendix:1}.
\end{IEEEproof}

To address the original problem in~\eqref{eq:9}, we begin by solving its equivalent reformulation in~\eqref{eq:14}. As established in the proposition above, the optimal solution of~\eqref{eq:14} is also optimal for the main problem~\eqref{eq:9}, which allows us to focus on~\eqref{eq:14} without any loss of optimality. The objective function $f_2$ is non-convex due to the multiplicative coupling among scheduling variables, transmit powers, and auxiliary variables. In particular, the terms $d_{kj}\sqrt{(1+\chi^{\text{dl}}_{kj})p_k^{\text{dl}}}$ and  $u_{kj}\sqrt{(1+\chi^{\text{ul}}_{kj})p_k^{\text{ul}}}$ couple binary scheduling variables, power variables, and auxiliary variables. Moreover, the quadratic terms $(\xi_{kj}^{\text{dl}})^2$ and $(\xi_{kj}^{\text{ul}})^2$ are multiplied by interference expressions that depend on all \cglspl{ue} and satellites. Therefore, $f_2$ is not jointly concave with respect to the optimization variables.	However, when other variables are fixed, $f_2$ becomes strictly concave in  $ \xi^{\text{dl}}_{kj}$ and in $\xi^{\text{ul}}_{kj}$. Further, keeping other variables fixed, $f_1$ is strictly concave in $ \chi^{\text{dl}}_{kj}$ and in $\chi^{\text{ul}}_{kj}$. This structure motivates an alternating optimization approach, which generates candidate solutions that correspond to local maxima of problem \eqref{eq:9}.

To solve~\eqref{eq:14}, the optimization variables are updated sequentially in three blocks. Specifically, for a fixed scheduling vector $\bm r$, we first update the auxiliary variables $\bm \chi^{\mathrm{dl}}$ and $\bm \chi^{\mathrm{ul}}$; next, the auxiliary variables $\bm \xi^{\mathrm{dl}}$ and $\bm \xi^{\mathrm{ul}}$ are updated; and finally, the main continuous variables $\bm d$, $\bm u$, $\bm p^{\mathrm{dl}}$, and $\bm p^{\mathrm{ul}}$ are optimized jointly. At each stage, one block of variables is optimized while the remaining blocks are held fixed, which guarantees a monotonic improvement of the objective value. This helps us establish the convergence to a stationary point \cgls{wrt} the continuous variables for fixed binary variables. In the following subsections, we detail each of these updates and explain how their sequential execution yields a solution to~\eqref{eq:14}, and consequently to the original problem in~\eqref{eq:9}.

\begin{figure*}[b]
	\normalsize
	\setcounter{MYtempeqncnt}{\value{equation}}
	\setcounter{equation}{17}
	\hrulefill 
	\vspace{0.5em}
	\begin{minipage}{\textwidth}
		\begin{align}
		f_3&(\bm d, \bm r, \bm u, \bm p^{\text{dl}}, \bm p^{\text{ul}}, \,\bm \chi^{\text{dl}} , \bm \chi^{\text{ul}}, \bm \xi^{\text{dl}}, \bm \xi^{\text{ul}} ) 
		=   \sum_{k \in \mathcal{K}} \sum_{j\in \mathcal{J}}  \Big[\log_2 (1+  	\chi^{\text{dl}}_{kj} )	+ \log_2 ( 1 + \chi^{\text{ul}}_{kj} ) - \chi^{\text{dl}}_{kj}  - \chi^{\text{ul}}_{kj} - (\xi_{kj}^{\text{dl}} \sigma)^2 - (\xi_{kj}^{\text{ul}} \sigma)^2
			\nonumber \\
			&+  2 \xi_{kj}^{\text{dl}} \, d_{kj} \sqrt{(1+  	\chi^{\text{dl}}_{kj} ) p_k^{\mathrm{dl}}} \, \lVert\bm \gamma^{\text{dl}}_{kjj}\rVert + 2 \xi_{kj}^{\text{ul}} \, u_{kj}\sqrt{(1+  	\chi^{\text{ul}}_{kj} ) p_k^{\mathrm{ul}}} \, \lVert\bm \gamma^{\text{ul}}_{kjj}  \rVert -d_{kj} p_k^{\mathrm{dl}}  \sum_{j' \in \mathcal{J}} \sum_{k' \in \mathcal{K}} (\xi_{k'j'}^{\text{dl}})^2  \, \lvert (\bm \gamma^{\text{dl}}_{k'j'j})^T \bm w_{kj} \rvert^2    \nonumber \\ 
			& - u_{kj} p_k^{\mathrm{ul}} \sum_{j' \in \mathcal{J}_{-j}} \sum_{k' \in \mathcal{K}_{-k}} (\xi_{k'j'}^{\text{dl}})^2  \,\lvert \nu_{k'kj'j} \rvert^2- u_{kj} p_k^{\mathrm{ul}}  \sum_{j' \in \mathcal{J}} \sum_{k' \in \mathcal{K}} (\xi_{k'j'}^{\text{ul}})^2   \lvert \bm v^T_{k'j'} \bm \gamma^{\text{ul}}_{kj'j} \rvert^2   \Big]. \label{eq:18}
		\end{align}
	\end{minipage}
	\setcounter{equation}{\value{MYtempeqncnt}}
\end{figure*}

\subsection{Block  Updates}	
\subsubsection{Update  {$\bm \chi^{\mathrm{dl}}$ and $ \bm \chi^{\mathrm{ul}}$}}
\label{sec:3a}
The objective $f_1$ is strictly concave in $\chi^{\text{dl}}_{kj}$ and $\chi^{\text{ul}}_{kj}$. Hence, by keeping other parameters fixed, the unique optimal value of $\chi^{\text{dl}}_{kj}$ and $\chi^{\text{ul}}_{kj}$ can be obtained from $\partial f_1 / \partial \chi^{\text{dl}}_{kj} = 0$ and $\partial f_1 / \partial \chi^{\text{ul}}_{kj} = 0$, respectively, as
\begin{subequations}\label{eq:15}
	\begin{align}
		\chi_{kj}^{\text{dl}\star} &=\, d_{kj}\, p_{k}^{\text{dl}} \, \lVert \bm \gamma^{\mathrm{dl}}_{kjj} \rVert^2 \nonumber \\ \Big/ &\Big( \sigma^2 +  \sum_{j' \in \mathcal{J}} \sum_{k' \in \mathcal{K}_{-k}} d_{k'j'}\, p_{k'}^{\text{dl}} \, \lvert ( \bm \gamma^{\mathrm{dl}}_{kjj'} )^T  \bm w_{k'j'}\rvert^2  \nonumber \\
		&+ \sum_{j' \in \mathcal{J}_{-j} }\sum_{k' \in \mathcal{K}_{-k}}  u_{k'j'} \, p_{k'}^{\text{ul}}\, \lvert \nu_{kk'jj'}\rvert^2 \Big),\\
		\chi_{kj}^{\text{ul}\star} &= u_{kj} \, p_{k}^{\text{ul}}  \lVert \bm \gamma^{\mathrm{ul}}_{kjj} \rVert^2 	\nonumber\\ \Big/ & \Big( \sigma^2 +  \sum_{j' \in \mathcal{J}} \sum_{k' \in \mathcal{K}_{-k}} u_{k'j'} \, p_{k'}^{\text{ul}} \, \lvert \bm v_{kj}^T \bm \gamma^{\mathrm{ul}}_{k'jj'}\rvert^2 \Big).
	\end{align}
\end{subequations}
The objective is separable in $\chi^{\text{dl}}_{kj}$ and $\chi^{\text{ul}}_{kj}$; therefore, the variables are updated simultaneously.

\subsubsection{Update {$\bm \xi^{\mathrm{dl}}$ and $\bm \xi^{\mathrm{ul}}$}}
The objective $f_2$ is strictly concave in $\xi^{\text{dl}}_{kj}$ and $\xi^{\text{ul}}_{kj}$. Hence, by keeping other parameters fixed, the unique optimal value of $\xi^{\text{dl}}_{kj}$ and $\xi^{\text{ul}}_{kj}$ can be obtained from $\partial f_2 / \partial \xi^{\text{dl}}_{kj} = 0$ and $\partial f_2 / \partial \xi^{\text{ul}}_{kj} = 0$, respectively, as
\begin{subequations}\label{eq:16}
	\begin{align}
		\xi^{\text{dl}\star}_{kj} = \,&d_{kj }\sqrt{(1+  \chi^{\text{dl}}_{kj}) p_{k}^{\text{dl}}} \, \lVert\bm \gamma^{\text{dl}}_{kjj}  \rVert  \nonumber \\
		\Big/ &\Big( \sigma^2  +\sum_{j' \in \mathcal{J}} \sum_{k' \in \mathcal{K}} d_{k'j'}\, p_{k'}^{\text{dl}} \, \lvert (\bm \gamma^{\text{dl}}_{kjj'})^T \bm w_{k'j'} \rvert^2    \nonumber \\
		& + \sum_{j' \in \mathcal{J}_{-j}} \sum_{k' \in \mathcal{K}_{-k}}  u_{k'j'} \, p_{k'}^{\text{ul}} \lvert \, \nu_{kk'jj'}\rvert^2 \Big), \\
		\xi^{\text{ul}\star}_{kj}  = \,&u_{kj }\sqrt{(1+  	\chi^{\text{ul}}_{kj}) p_{k}^{\text{ul}}} \, \lVert\bm \gamma^{\text{ul}}_{kjj} \rVert  \nonumber\\
		&\Big/\Big( \sigma^2  + \sum_{j' \in \mathcal{J}} \sum_{k' \in \mathcal{K}} u_{k'j'} \, p_{k'}^{\text{ul}}   \lvert \bm v^T_{kj} \bm \gamma^{\text{ul}}_{k'jj'} \rvert^2  \Big).
	\end{align}
\end{subequations}
The objective is separable in $\xi^{\text{dl}}_{kj}$ and $\xi^{\text{ul}}_{kj}$; therefore, the variables are updated simultaneously.

\subsubsection{Update {$\bm d,   \bm u, \bm p^{\mathrm{dl}}$, and $\bm p^{\mathrm{ul}}$}}
We jointly update $(\bm d,  \bm u, \bm p^{\text{dl}}, \bm p^{\text{ul}})$ by solving
\begin{subequations}\label{eq:17}
	\begin{align}
		& \underset{ \substack{\bm d,  \bm u, \bm p^{\text{dl}}, \bm p^{\text{ul}}}}{\text{max}}  f_2( \bm d, \bm r, \bm u, \bm p^{\text{dl}}, \bm p^{\text{ul}}, \bm \chi^{\text{dl}\star} , \bm \chi^{\text{ul}\star}, \bm \xi^{\text{dl}\star},  \bm \xi^{\text{ul}\star} ) \\
		& \qquad \text{s.t. } 
		\eqref{eq:9b}-\eqref{eq:9g}.
	\end{align}
\end{subequations}
This mathematical program involves the product of the square root of the power variable and a binary variable, as well as the product of the power variable and a binary variable. For ease of solution, we reformulate the objective $f_2$ as $f_3$ in \eqref{eq:18}, shown at the bottom of the page. The reformulation in $f_3$ reduces the number of required auxiliary variables. Specifically, instead of introducing separate auxiliary variables to model $d_{kj}\sqrt{p_k^{\mathrm{dl}}}$ and $d_{k'j'}p_{k'}^{\mathrm{dl}}$, the reformulation enables both terms to be represented using the same set of auxiliary variables, as shown next. Further, in the following proposition, we establish that they are the equivalent.
\begin{proposition}
	\label{prop:2}
	The objective functions $f_2$ and $f_3$ are equal for all $\bm d, \bm r, \bm u, \bm p^{\text{dl}}, \bm p^{\text{ul}}, \,\bm \chi^{\text{dl}} , \bm \chi^{\text{ul}}, \bm \xi^{\text{dl}}, \bm \xi^{\text{ul}}$.
\end{proposition}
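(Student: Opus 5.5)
The plan is a direct algebraic verification. Proposition~\ref{prop:2} is an identity between two finite sums, so I will expand $f_2$ term by term and regroup it into the form of $f_3$. No optimization argument is needed, only linearity and relabeling of summation indices over the finite sets $\mathcal{J}$ and $\mathcal{K}$.

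First, I split $f_2$ in \eqref{eq:12} into three groups of terms.
\begin{itemize}
\item The "local" terms $\log_2(1+\chi^{\text{dl}}_{kj})+\log_2(1+\chi^{\text{ul}}_{kj})-\chi^{\text{dl}}_{kj}-\chi^{\text{ul}}_{kj}$ and the two linear terms in $\xi^{\text{dl}}_{kj}$ and $\xi^{\text{ul}}_{kj}$. These appear verbatim in $f_3$.
\item The noise contributions. Distributing $(\xi^{\text{dl}}_{kj})^2$ and $(\xi^{\text{ul}}_{kj})^2$ over the parentheses gives $-(\xi^{\text{dl}}_{kj}\sigma)^2-(\xi^{\text{ul}}_{kj}\sigma)^2$, which also match $f_3$ directly.
\item Three interference double sums: DL-to-DL, UL-to-DL (UE-UE), and UL-to-UL. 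Each carries a prefactor $-(\xi_{kj})^2$ inside the outer sum over $(k,j)$.
\end{itemize}

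The key step is to rewrite each interference term as a quadruple sum over $(k,j,k',j')$ and swap the roles of the pairs $(k,j)$ and $(k',j')$. For the DL-to-DL term, the quadruple sum is
$-\sum_{k,j}\sum_{k',j'}(\xi^{\text{dl}}_{kj})^2 d_{k'j'}p^{\text{dl}}_{k'}\lvert(\bm\gamma^{\text{dl}}_{kjj'})^T\bm w_{k'j'}\rvert^2$.
Relabeling $(k,j)\leftrightarrow(k',j')$ turns it into
$-\sum_{k,j} d_{kj}p^{\text{dl}}_k\sum_{k',j'}(\xi^{\text{dl}}_{k'j'})^2\lvert(\bm\gamma^{\text{dl}}_{k'j'j})^T\bm w_{kj}\rvert^2$,
which is exactly the corresponding term of $f_3$.

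The UL-to-UL term works the same way and yields $-u_{kj}p^{\text{ul}}_k\sum_{k',j'}(\xi^{\text{ul}}_{k'j'})^2\lvert\bm v^T_{k'j'}\bm\gamma^{\text{ul}}_{kj'j}\rvert^2$.

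The UE-UE term has restricted index sets $j'\in\mathcal{J}_{-j}$ and $k'\in\mathcal{K}_{-k}$. These conditions are symmetric, since $j'\neq j \iff j\neq j'$ and likewise for $k$. The domain of summation $\{(k,j,k',j'): k'\neq k,\ j'\neq j\}$ is therefore invariant under the swap. The relabeled term becomes $-u_{kj}p^{\text{ul}}_k\sum_{j'\in\mathcal{J}_{-j}}\sum_{k'\in\mathcal{K}_{-k}}(\xi^{\text{dl}}_{k'j'})^2\lvert\nu_{k'kj'j}\rvert^2$, as in \eqref{eq:18}.

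I expect the main obstacle to be only the bookkeeping of indices. I must check that every subscript of $\bm\gamma^{\text{dl}}$, $\bm\gamma^{\text{ul}}$, $\bm w$, $\bm v$ and $\nu$ is permuted consistently under the swap. In particular, $\nu_{kk'jj'}$ must become $\nu_{k'kj'j}$, and the precoder or combiner must stay attached to the transmitting (respectively receiving) link. I must also confirm that the outer sums in $f_2$ and $f_3$ range over the same sets, with full $\mathcal{K}$ in the DL-to-DL and UL-to-UL sums.

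Since all sums are finite, exchanging the order of summation is always justified. The identity then holds pointwise for every value of the variables, including all binary values of $\bm d,\bm r,\bm u$. This pointwise equality is precisely the claim.
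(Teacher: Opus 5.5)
Your proposal is correct and takes the same route as the paper. The paper's proof says only that interchanging the order of the finite sums and relabeling $(k,j)\leftrightarrow(k',j')$ in the interference terms gives the equality; your term-by-term check, including the point that the index set $\{k'\neq k,\ j'\neq j\}$ of the UE--UE term is invariant under the swap, spells that argument out.
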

\begin{IEEEproof}
		By interchanging the order of sums and relabeling the indices for the interference sums, we obtain the equality.
\end{IEEEproof}
Considering the above proposition, replacing  objective $f_2$ by $f_3$ in \eqref{eq:17} leads to an equivalent problem. Now, with the reformulated objective $f_3$, to handle both power variables and its square-root, we substitute
\begin{align}
	\setcounter{MYtempeqncnt}{\value{equation}}
	\setcounter{equation}{18}
	t_{k}^{\mathrm{dl}} = \sqrt{p_{k}^{\mathrm{dl}}}, \quad t_{k}^{\mathrm{ul}} = \sqrt{p_{k}^{\mathrm{ul}}},  \quad k = 1,\cdots,K, \label{eq:19}
\end{align}
and define $	\bm t^{\mathrm{dl}} = [t_{k}^{\mathrm{dl}}]_{k=1}^K, \bm t^{\mathrm{ul}} = [t_{k}^{\mathrm{ul}}]_{k=1}^K$. This substitution is one-to-one for $t_{k}^{\mathrm{dl}} \geq 0$ and $t_{k}^{\mathrm{ul}} \geq 0$. Therefore, both  problems are equivalent. Then, the continuous-binary product can be handled by the following substitution
\begin{align}
	z^{\mathrm{dl}}_{kj} = d_{kj} t_k^{\mathrm{dl}}, \quad z^{\mathrm{ul}}_{kj} = u_{kj} t_k^{\mathrm{ul}}, \quad \text{for all $k$ and $j$} \label{eq:20}
\end{align}
where the auxiliary variables are $\bm z^{\mathrm{dl}} = [z^{\mathrm{dl}}_{kj}]_{k=1,j=1}^{k=K, j=J}$ and $\bm z^{\mathrm{ul}} = [z^{\mathrm{ul}}_{kj}]_{k=1,j=1}^{k=K, j=J}$. We use standard big-$M$~\cite{Dejans2025} constraints to enforce \eqref{eq:20}. In particular, consider a continuous variable $p \in \mathbb{R}$, with $0\leq p \leq M$, and a binary  variable $b \in \{0,1\}$. Then, $z = bp$ is equivalent to the following inequalities~\cite{Glover1975, Adams2005}:
\begin{align}
	 \text{(i)}\,  z \geq 0, & \quad \text{(ii)}\,   z \leq Mb, \label{eq:21}\\
  \text{(iii)}\, z \leq p, & \quad \text{(iv)}\, z \geq p - M(1 - b). \label{eq:22}
\end{align}
Thus, we can implement \eqref{eq:20} as 
\begin{subequations}\label{eq:23}
	\begin{align}
		&0 \leq z_{kj}^{\text{dl}} \leq  t^{\text{dl}}_{k}, 0 \leq z_{kj}^{\text{ul}} \leq  t^{\text{ul}}_{k}, \label{eq:23a}\\
		&t_k^{\text{dl}} - M(1-d_{kj}) \leq z_{kj}^{\text{dl}} \leq Md_{kj}, \label{eq:23b}\\
		&t_k^{\text{ul}} - M(1-u_{kj}) \leq z_{kj}^{\text{ul}} \leq Mu_{kj}.\label{eq:23c}
	\end{align}
\end{subequations}
For equivalence to hold, we need to select a proper value for $M$, and this will be described below in section~\ref{sec:selectM}. Further, observe that constraint \eqref{eq:9a} is not amenable for most solvers. An equivalent, and more solver-friendly, version is
\begin{align}
	&\Big\lvert \sum_{j \in \mathcal{J}} d_{kj} r_{j} - \sum_{j \in \mathcal{J}}  u_{kj} r_{j} \Big \rvert  \nonumber \\
	&\quad \qquad \qquad  \leq M( 2- \sum_{j \in \mathcal{J}} d_{kj} - \sum_{j \in \mathcal{J}} u_{kj} ), \forall k. \label{eq:24} 
\end{align}
The left-hand side of the above constraint measures the difference in spin (band assignment) between the satellite serving the \cgls{dl} and the one serving the \cgls{ul}. The right-hand side, using a large number $M$, acts as a conditional switch: if both \cgls{dl} and \cgls{ul} links are assigned, the right side is zero, forcing the spins to match and preventing simultaneous transmission and reception on same bands; if at most one link is assigned, the right side is large and the constraint is inactive.
\begin{figure*}[b]
	\normalsize
	\setcounter{MYtempeqncnt}{\value{equation}}
	\setcounter{equation}{25}
	\hrulefill 
	\vspace{0.5em}
	\begin{minipage}{\textwidth}
		\begin{align}
			f_4(\bm d, \bm r, \bm u, &\bm t^{\text{dl}}, \bm t^{\text{ul}}, \bm z^{\mathrm{dl}}, \bm z^{\mathrm{ul}},\bm \chi^{\text{dl}} , \bm \chi^{\text{ul}}, \bm \xi^{\text{dl}}, \bm \xi^{\text{ul}} ) =   \sum_{k \in \mathcal{K}} \sum_{j\in \mathcal{J}}  \Big[\log_2 (1+  	\chi^{\text{dl}}_{kj} )	+ \log_2 ( 1 + \chi^{\text{ul}}_{kj} ) - \chi^{\text{dl}}_{kj}  - \chi^{\text{ul}}_{kj} - (\xi_{kj}^{\text{dl}} \sigma)^2 - (\xi_{kj}^{\text{ul}} \sigma)^2
			\nonumber \\
			&+  2 \xi_{kj}^{\text{dl}} \, z_{kj}^{\mathrm{dl}}\sqrt{(1+  	\chi^{\text{dl}}_{kj} )} \, \lVert\bm \gamma^{\text{dl}}_{kjj}\rVert + 2 \xi_{kj}^{\text{ul}} \, z_{kj}^{\mathrm{ul}}\sqrt{(1+  	\chi^{\text{ul}}_{kj} )} \, \lVert\bm \gamma^{\text{ul}}_{kjj}  \rVert - (z_{kj}^{\mathrm{dl}})^2 \sum_{j' \in \mathcal{J}} \sum_{k' \in \mathcal{K}} (\xi_{k'j'}^{\text{dl}})^2  \, \lvert (\bm \gamma^{\text{dl}}_{k'j'j})^T \bm w_{kj} \rvert^2    \nonumber \\ 
			& - (z_{kj}^{\mathrm{ul}})^2 \sum_{j' \in \mathcal{J}_{-j}} \sum_{k' \in \mathcal{K}_{-k}} (\xi_{k'j'}^{\text{dl}})^2  \,\lvert \nu_{k'kj'j} \rvert^2- (z_{kj}^{\mathrm{ul}})^2  \sum_{j' \in \mathcal{J}} \sum_{k' \in \mathcal{K}} (\xi_{k'j'}^{\text{ul}})^2   \lvert \bm v^T_{k'j'} \bm \gamma^{\text{ul}}_{kj'j} \rvert^2   \Big]. \label{eq:26}
		\end{align}
	\end{minipage}
	\setcounter{equation}{\value{MYtempeqncnt}}
\end{figure*}

Finally, the problem in \eqref{eq:14} is written with big-$M$ constraints as
\begin{subequations}\label{eq:25}
	\begin{align}
		&\underset{ \substack{\bm d,  \bm u, \\\bm t^{\text{dl}}, \bm t^{\text{ul}}, \\ \bm z^{\mathrm{dl}}, \bm z^{\mathrm{ul}}}}{\text{max}}  f_4( \bm d, \bm r, \bm u, \bm t^{\text{dl}}, \bm t^{\text{ul}}, \bm z^{\mathrm{dl}},  \bm z^{\mathrm{ul}}, \bm \chi^{\text{dl}\star}, \bm \chi^{\text{ul}\star}, \bm \xi^{\text{dl}\star},   \bm \xi^{\text{ul}\star}) \\
		& \quad  \text{s.t. } \quad \eqref{eq:24}, \label{eq:25b}\\
		& \quad \qquad \sum_{k \in \mathcal{K}} (z_{kj}^{\text{dl}})^2\leq p_{j}^{\text{max}}, \text{ for all } j, \label{eq:25c}\\
		& \quad \qquad t_{k}^{\mathrm{dl}} \geq 0, t_{k}^{\mathrm{ul}} \geq 0, \quad \text{ for all } k, \label{eq:25d}\\
		& \quad  \qquad \eqref{eq:9c}, \eqref{eq:9e}-\eqref{eq:9g}, \eqref{eq:23a}-\eqref{eq:23c}. \label{eq:25e}
	\end{align}
\end{subequations}
 The objective $f_4$ is shown at the bottom in \eqref{eq:26}, which is concave in $z_{kj}^{\mathrm{dl}}$ and $z_{kj}^{\mathrm{ul}}$, and can be solved to optimality using industry grade \cgls{minlp} solvers. We use MOSEK here.

\subsubsection{Selection of $M$}
\label{sec:selectM}

For the linearization in \eqref{eq:23}, the constant $M$ must serve as an upper bound on the values of the continuous variables involved in each binary-continuous product. Further, $M$ should be selected to act as a conditional switch in \eqref{eq:24}. 
\begin{proposition}
	\label{prop:bigM}
	A common constant $M$ that is suitable for \eqref{eq:23} and \eqref{eq:24} is
	\begin{align*}
		M = \Big\lceil \max_{k,j} \big\{1,\ \sqrt{p_j^{\text{max}}},\ \sqrt{p_k^{\text{max}} }\Big\} \Big\rceil,
	\end{align*}
	where $\lceil \cdot \rceil$ denotes the ceiling operator.
\end{proposition}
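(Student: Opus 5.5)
The plan is to verify separately the two requirements on $M$ named in Section~\ref{sec:selectM}. For \eqref{eq:23}, $M$ must dominate the continuous factors $t_k^{\mathrm{dl}}$ and $t_k^{\mathrm{ul}}$ on the feasible set; this makes \eqref{eq:21}--\eqref{eq:22} an exact encoding of \eqref{eq:20}. For \eqref{eq:24}, $M$ must be large enough that the constraint is slack whenever the right-hand side is not forced to zero. Both reduce to elementary bounds, so the proof is a short case analysis.

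\emph{Step 1 (bounds on $t$).} From \eqref{eq:9e}, $0\le t_k^{\mathrm{ul}}=\sqrt{p_k^{\mathrm{ul}}}\le\sqrt{p_k^{\max}}$. For the DL, \eqref{eq:9d} only bounds $p_k^{\mathrm{dl}}$ when $d_{kj}=1$ for some $j$; then $p_k^{\mathrm{dl}}\le p_j^{\max}$. If $d_{kj}=0$ for all $j$, $t_k^{\mathrm{dl}}$ does not appear in $f_4$ except through $z^{\mathrm{dl}}_{kj}=0$. We may therefore restrict to $t_k^{\mathrm{dl}}\le\max_j\sqrt{p_j^{\max}}$ without loss of optimality, or argue that \eqref{eq:23b} with $d_{kj}=0$ imposes $t_k^{\mathrm{dl}}\le M$ and that this restriction is harmless. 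This is the step I expect to be the main subtlety: making precise that the big-$M$ encoding does not cut off any solution relevant to \eqref{eq:17}. I would handle it by showing every optimal point of \eqref{eq:17} maps to a feasible point of \eqref{eq:25} with the same value, after setting unused $t$'s to $0$.

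\emph{Step 2 (exactness of \eqref{eq:23}).} With $0\le t\le M$ and $b\in\{0,1\}$, check both cases. If $b=0$, (i)--(ii) give $z=0$, and (iii)--(iv) reduce to $t-M\le 0\le t$, which holds. If $b=1$, (iii)--(iv) give $z=t$, and (ii) is $t\le M$, which holds. The converse, that $z=bt$ satisfies all four inequalities, is immediate. The ceiling and the $\max$ over all $k,j$ only enlarge $M$, so exactness is preserved.

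\emph{Step 3 (switch in \eqref{eq:24}).} By \eqref{eq:9c} each sum $\sum_j d_{kj}$ and $\sum_j u_{kj}$ lies in $\{0,1\}$, and $\sum_j d_{kj}r_j\in\{0,1\}$ as well, so the left-hand side is at most $1$. If both sums equal $1$, the right-hand side is $0$, forcing equality, which is exactly \eqref{eq:1}. Otherwise the right-hand side is at least $M\ge1$, so the constraint is inactive, matching the statement that \eqref{eq:1} imposes nothing in that case. The term $1$ in the $\max$ is what guarantees $M\ge1$, which completes the argument.
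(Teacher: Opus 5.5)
Your proposal is correct and follows the same route as the paper's proof. Both arguments bound $t_k^{\mathrm{dl}}, t_k^{\mathrm{ul}}$ by $\sqrt{p_j^{\text{max}}}, \sqrt{p_k^{\text{max}}}$ to handle \eqref{eq:23}, and bound the left-hand side of \eqref{eq:24} by $1$ via \eqref{eq:9c}. Your version is somewhat more careful than the paper's: it checks \eqref{eq:21}--\eqref{eq:22} case by case, and it handles DL users with $d_{kj}=0$ for all $j$, whose power \eqref{eq:9d} leaves unbounded. Those powers can be set to zero without changing the objective, whereas the paper simply asserts $p_k^{\mathrm{dl}} \le p_j^{\text{max}}$.
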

\begin{IEEEproof}
	See Appendix~\ref{appendix:bigM}.
\end{IEEEproof}

With the $M$ value selected, we state the following proposition to establish the equivalence between the mathematical programs \eqref{eq:17} and \eqref{eq:25} below.
\begin{proposition}
	\label{prop:4}
	Consider the mathematical programs in \eqref{eq:17} and \eqref{eq:25}. Then, if  $(\bm d^\star,  \bm u^\star, \bm t^{\text{dl}\star}, \bm t^{\text{ul}\star},  \bm z^{\mathrm{dl}\star}, \bm z^{\mathrm{ul}\star})$ is a solution to \eqref{eq:25}, it follows that $(\bm d^\star,  \bm u^\star, \bm p^{\text{dl}\star}, \bm p^{\text{ul}\star})$ is a solution to \eqref{eq:17}, with $\bm p^{\text{dl}\star} = (\bm t^{\text{dl}\star})^2$ and $\bm p^{\text{ul}\star} = (\bm t^{\text{ul}\star})^2$.
\end{proposition}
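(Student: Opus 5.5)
The plan is to build a value-preserving correspondence between the feasible sets of \eqref{eq:25} and \eqref{eq:17}, in both directions. From that, optimality transfers directly.

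\emph{Forward map.} Take any point $(\bm d,\bm u,\bm t^{\mathrm{dl}},\bm t^{\mathrm{ul}},\bm z^{\mathrm{dl}},\bm z^{\mathrm{ul}})$ feasible for \eqref{eq:25}. Set $p_k^{\mathrm{dl}}=(t_k^{\mathrm{dl}})^2$ and $p_k^{\mathrm{ul}}=(t_k^{\mathrm{ul}})^2$. By Proposition~\ref{prop:bigM}, $M$ bounds every $t_k$: we have $p_k^{\mathrm{ul}}\le p_k^{\max}$, and $z^{\mathrm{dl}}_{kj}\le\sqrt{p_j^{\max}}$ by \eqref{eq:25c}. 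So the big-$M$ inequalities \eqref{eq:21}--\eqref{eq:22} enforce $z^{\mathrm{dl}}_{kj}=d_{kj}t_k^{\mathrm{dl}}$ and $z^{\mathrm{ul}}_{kj}=u_{kj}t_k^{\mathrm{ul}}$.

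There is one subtlety. If $d_{kj}=0$ for every $j$, then $t_k^{\mathrm{dl}}$ is not bounded by \eqref{eq:25c}. It still does not enter $f_4$, so I would replace it by $0$ without changing the objective or feasibility.

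With these identities in hand, $(z^{\mathrm{dl}}_{kj})^2=d_{kj}p_k^{\mathrm{dl}}$ because $d_{kj}^2=d_{kj}$. Hence \eqref{eq:25c} becomes \eqref{eq:9d}. Likewise, \eqref{eq:24} is equivalent to \eqref{eq:1}, by the conditional-switch argument of Proposition~\ref{prop:bigM}: the left-hand side is at most $1$ and $M\ge 1$.

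Substituting the same identities into $f_4$, term by term, gives $f_3$: $z\sqrt{1+\chi}$ becomes $d\sqrt{(1+\chi)p}$, and $z^2$ becomes $dp$. By Proposition~\ref{prop:2}, $f_3=f_2$. So the forward map sends feasible points to feasible points with the same objective value.

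\emph{Backward map.} Take any point $(\bm d,\bm u,\bm p^{\mathrm{dl}},\bm p^{\mathrm{ul}})$ feasible for \eqref{eq:17}. Set $t=\sqrt p$ and $z$ by \eqref{eq:20}. We must check that $t_k\le M$ whenever the big-$M$ constraints bind. For UL this follows from \eqref{eq:9e}. For DL, if $d_{kj}=1$ then $p_k^{\mathrm{dl}}\le p_j^{\max}$ by \eqref{eq:9d}. If $d_{kj}=0$ for all $j$, then $p_k^{\mathrm{dl}}$ does not affect $f_2$, and I would reset it to $0$ without loss. The resulting point is feasible for \eqref{eq:25} with equal objective value.

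\emph{Conclusion.} Suppose $(\bm d^\star,\bm u^\star,\dots)$ is optimal for \eqref{eq:25} but its image were not optimal for \eqref{eq:17}. Some strictly better feasible point of \eqref{eq:17} would then exist. The backward map would carry it to a strictly better point of \eqref{eq:25}, which is a contradiction.

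\emph{Main obstacle.} The delicate step is the unassigned-user case, where $t_k$ is unconstrained by \eqref{eq:25c}. One must verify that such a $t_k$ really appears nowhere in $f_4$ or in the other constraints. Any value of $t_k$ is then harmless, and the argument goes through.
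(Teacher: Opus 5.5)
Your proof is correct and is essentially the paper's argument in Appendix~\ref{appendix:4}: substitute $t=\sqrt{p}$ and $z=dt$, use Proposition~\ref{prop:2} to get $f_4=f_3=f_2$, and match \eqref{eq:24} with \eqref{eq:1} and \eqref{eq:25c} with \eqref{eq:9d}, written out more carefully as a two-way value-preserving correspondence; your reset of $p_k^{\mathrm{dl}}$ for DL-unassigned users in the backward map addresses a point the paper's terse proof passes over, since such a $p_k^{\mathrm{dl}}$ may exceed $M^2$ in \eqref{eq:17}, and $\sqrt{p_k^{\mathrm{dl}}}$ would then violate \eqref{eq:23b}. The forward-direction ``subtlety'', by contrast, is unnecessary: \eqref{eq:23a}--\eqref{eq:23b} by themselves force $z^{\mathrm{dl}}_{kj}=d_{kj}t^{\mathrm{dl}}_k$ and $t^{\mathrm{dl}}_k\le M$ at every feasible point of \eqref{eq:25}, and neither $f_2$ nor the constraints of \eqref{eq:17} involve the DL power of an unassigned user, so the stated point $(\bm t^{\mathrm{dl}\star})^2$ can be used directly without replacing anything by $0$.
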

\begin{IEEEproof}
	See Appendix \ref{appendix:4}.
\end{IEEEproof}
With above equivalence, we now state the full algorithm to solve the original problem in \eqref{eq:9} in the next subsection.

\subsection{Algorithm}

The complete solution procedure is summarized in Algorithm~\ref{algo:1}, where the iteration index of the alternating optimization procedure is denoted by $i$. The algorithm enumerates all feasible binary scheduling vectors $\bm r \in \{0,1\}^J$. For each fixed realization of $\bm r$, the continuous optimization variables $\bm d_{(0)}$, $\bm u_{(0)}$, $\bm p^{\mathrm{dl}}_{(0)}$, and $\bm p^{\mathrm{ul}}_{(0)}$ are initialized to a feasible point that satisfies all constraints. The scheduling variables $\bm d_{(0)}$ and $\bm u_{(0)}$ are initialized such that, for each \cgls{ue}~$k$, we randomly choose a satellite $j \in \mathcal{J}$ and set $d_{kj} = u_{kj} = 1$, which satisfies \eqref{eq:9c}. Further, this initialization also satisfies the constraint~\eqref{eq:24}. Then, we allocate equal power for the \cgls{dl} to the scheduled \cglspl{ue}, i.e., $p_{k}^{\mathrm{dl}} = {p_j^{\mathrm{max}}}/{\sum_{k \in \mathcal{K}} d_{kj}}$. The \cgls{ul} power at \cgls{ue}~$k$ is initialized as $p_{k}^{\mathrm{ul}} = p_{k}^{\mathrm{max}}$. For a given $\bm r$, the resulting problem is then solved iteratively via an alternating (block-coordinate) optimization procedure. At iteration $i$, the auxiliary variables
$\bm \chi^{\mathrm{dl}}_{(i)}$ and $\bm \chi^{\mathrm{ul}}_{(i)}$ are first updated according to~\eqref{eq:15}. Subsequently, the variables
$\bm \xi^{\mathrm{dl}}_{(i)}$ and $\bm \xi^{\mathrm{ul}}_{(i)}$ are updated using~\eqref{eq:16}. With these auxiliary variables fixed, the remaining variables
$\big(\bm d_{(i)}, \bm u_{(i)}, \bm p^{\mathrm{dl}}_{(i)}, \bm p^{\mathrm{ul}}_{(i)}\big)$ are jointly updated by solving a \cgls{minlp} problem in \eqref{eq:25} using MOSEK. These steps are repeated until convergence is reached. Here, convergence implies the objective improvement is below a predefined threshold.  After convergence for all feasible realizations of $\bm r$, the algorithm selects the solution $\big(\bm d^\star, \bm r^\star, \bm u^\star,
\bm p^{\mathrm{dl}\star}, \bm p^{\mathrm{ul}\star},
\bm \chi^{\mathrm{dl}\star}, \bm \chi^{\mathrm{ul}\star},
\bm \xi^{\mathrm{dl}\star}, \bm \xi^{\mathrm{ul}\star}\big)$
that yields the maximum objective value.
Finally, the optimal objective value $f_0$ is computed from~\eqref{eq:10}
using $(\bm d^\star, \bm r^\star, \bm u^\star,
\bm p^{\mathrm{dl}\star}, \bm p^{\mathrm{ul}\star})$. The convergence properties of the proposed algorithm are discussed next.

\begin{algorithm}[t]
	\caption{Solution for optimization \eqref{eq:9}}
	\begin{algorithmic}[1]
		\ForAll{$\bm r = [r_1, \cdots, r_J] \in \{0,1\}^J$}
		\State \textbf{Initialize:} feasible
		$\bm d_{(0)}, \bm u_{(0)},
		\bm p^{\mathrm{dl}}_{(0)}, \bm p^{\mathrm{ul}}_{(0)}$
		\Repeat
				\State $i \leftarrow i+1$ 
		\State update $\bm \chi^{\mathrm{dl}}_{(i)}, \bm \chi^{\mathrm{ul}}_{(i)}$ by \eqref{eq:15}
		\State update $\bm \xi^{\mathrm{dl}}_{(i)}, \bm \xi^{\mathrm{ul}}_{(i)}$ by \eqref{eq:16}
		\State update $\bm d_{(i)},  \bm u_{(i)},
		\bm p^{\mathrm{dl}}_{(i)}, \bm p^{\mathrm{ul}}_{(i)}$
		by solving \eqref{eq:25}
		\Until{objective improvement is below a threshold}
		\EndFor
		\State select
		$\big(\bm d^\star, \bm r^\star, \bm u^\star,
		\bm p^{\mathrm{dl}\star}, \bm p^{\mathrm{ul}\star},
		\bm \chi^{\mathrm{dl}\star}, \bm \chi^{\mathrm{ul}\star},
		\bm \xi^{\mathrm{dl}\star}, \bm \xi^{\mathrm{ul}\star}\big)$
		\State compute the optimal objective $f_0$ from \eqref{eq:10}
		using $(\bm d^\star, \bm r^\star, \bm u^\star,
		\bm p^{\mathrm{dl}\star}, \bm p^{\mathrm{ul}\star})$
	\end{algorithmic}
	\label{algo:1}
\end{algorithm}

\begin{theorem}
\label{prop:5}
    Let $\big\{\big(\bm d_{(i)},  \bm u_{(i)}, \bm p^{\mathrm{dl}}_{(i)}, \bm p^{\mathrm{ul}}_{(i)}, \bm \chi^{\mathrm{dl}}_{(i)}, \bm \chi^{\mathrm{ul}}_{(i)}, \bm \xi^{\mathrm{dl}}_{(i)}, \bm \xi^{\mathrm{ul}}_{(i)}\big)\big\}$ be the sequence generated by the Algorithm~\ref{algo:1}. Then the following holds:
    \begin{enumerate}
        \item The sequence has at least one limit point, denoted by
        \begin{align*}
            \big(\bm d^{\star}, \bm u^{\star}, \bm p^{\mathrm{dl}{\star}}, \bm p^{\mathrm{ul}{\star}}, \bm \chi^{\mathrm{dl}{\star}}, \bm \chi^{\mathrm{ul}{\star}}, \bm \xi^{\mathrm{dl}{\star}}, \bm \xi^{\mathrm{ul}{\star}}\big).
        \end{align*}
        \item The continuous variables
        $(\bm p^{\mathrm{dl}\star}, \allowbreak \bm p^{\mathrm{ul}\star}, \allowbreak \bm \chi^{\mathrm{dl}\star},\allowbreak \bm \chi^{\mathrm{ul}\star},\allowbreak \bm \xi^{\mathrm{dl}\star}, \allowbreak\bm \xi^{\mathrm{ul}\star})$ are a stationary point of $f_2$, where  $(\bm d^\star, \bm u^\star)$ maximizes $f_2$.
        
   		 \item Since the algorithm performs exhaustive search over $\bm r$, the obtained solution is globally optimal \cgls{wrt} $\bm r$.
\end{enumerate}
\end{theorem}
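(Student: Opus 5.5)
Fix $\bm r$; the outer loop only enumerates finitely many such vectors, so statement~3 will follow at the end. For fixed $\bm r$ the iterates live in the feasible set of \eqref{eq:9}. In that set the binaries $(\bm d,\bm u)$ take finitely many values and the powers are bounded by \eqref{eq:9d}--\eqref{eq:9f}.

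The auxiliary variables are given by the closed forms \eqref{eq:15}--\eqref{eq:16}. They are continuous functions of $(\bm d,\bm u,\bm p^{\mathrm{dl}},\bm p^{\mathrm{ul}})$ with denominators at least $\sigma^2>0$. Hence $0\le\chi\le \max_{k,j} p^{\max}\lVert\bm\gamma\rVert^2/\sigma^2$, and $\xi$ is bounded in the same way. The whole sequence therefore lies in a compact set. Bolzano--Weierstrass then gives a convergent subsequence, which proves statement~1. Because the binary part takes finitely many values, I can further pass to a subsequence along which $(\bm d_{(i)},\bm u_{(i)})$ equals a constant $(\bm d^\star,\bm u^\star)$.

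Next I establish monotonicity. Each block update globally maximizes the current objective in its block:
\begin{itemize}
\item the $\chi$ update maximizes $f_1$, which is strictly concave in $\chi$;
\item the $\xi$ update maximizes $f_2$, which is strictly concave in $\xi$;
\item the joint $(\bm d,\bm u,\bm p)$ update solves \eqref{eq:17} exactly, via Proposition~\ref{prop:2} and Proposition~\ref{prop:4}.
\end{itemize}
To chain these into a single monotone quantity I need two facts. First, with $\xi$ set to its optimum \eqref{eq:16}, $f_2$ coincides with $f_1$; this is the quadratic-transform identity $\max_\xi(2\xi\sqrt{A}-\xi^2B)=A/B$. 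Second, with $\chi$ set to its optimum, $f_1$ equals $f_0$. Thus $f_2$ evaluated along the iterates is nondecreasing. It is also bounded above, since $f_0$ is bounded on the compact feasible set, so it converges to some $f^\star$.

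Statement~2 is the main obstacle. I will argue via fixed points of the block maps on the constant-binary subsequence. Let $(\bm p^\star,\chi^\star,\xi^\star)$ be its limit. The update maps \eqref{eq:15}--\eqref{eq:16} are continuous and have unique maximizers. Consider the shifted subsequence $i+1$. Passing to the limit in $f_2(\text{iterate}_{i+1})\ge f_2(\text{block-}i\text{ update})$ and using $f_2\to f^\star$, a standard argument (Bertsekas, Prop.~2.7.1 style) shows the following: $\chi^\star$ maximizes $f_1$ given the limit powers, and $\xi^\star$ maximizes $f_2$ given the rest. Uniqueness, which comes from strict concavity, is what makes the successive differences vanish, so the limits of consecutive iterates agree.

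For the power block, the third update is a global maximizer of $f_2$ over the whole feasible $(\bm d,\bm u,\bm p)$ set. The inequality $f_2(\text{new})\ge f_2(\bm d,\bm u,\bm p;\chi,\xi)$ holds for every feasible point, and it passes to the limit by continuity of $f_2$ in the continuous variables. Hence $(\bm d^\star,\bm u^\star,\bm p^\star)$ maximizes $f_2(\cdot;\chi^\star,\xi^\star)$. In particular $(\bm d^\star,\bm u^\star)$ maximizes $f_2$, and $\bm p^\star$ satisfies the KKT conditions of the convex power subproblem, after the substitution $t=\sqrt p$ the subproblem is concave.

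The first-order conditions now hold block by block: zero gradient in $\chi$ and in $\xi$, since these are interior unconstrained maximizers, and the KKT conditions in $\bm p$. Together they give stationarity of $f_2$ in $(\bm p,\chi,\xi)$ at the limit, because the feasible set is a Cartesian product across these blocks. One delicate point is differentiability in $\bm p$ at $p_k=0$, caused by the $\sqrt{p}$ terms. I would handle it by working in the variables $t$, where $f_2$ is smooth, and stating stationarity there.

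Finally, for statement~3, Algorithm~\ref{algo:1} compares the converged objectives over all $2^J$ vectors $\bm r$ and keeps the best. The returned point is therefore optimal over $\bm r$ among the computed solutions.
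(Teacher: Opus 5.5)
Your proposal is correct and follows essentially the same route as the paper's proof: the monotone chain through $f_1$, compactness giving an accumulation point, blockwise optimality of the limit, and enumeration of $\bm r$. Your sandwich step, in which $f_0$ at the $i_n$-th main iterate lies between the values of $f_2$ at iterates $i_n$ and $i_n+1$, is what actually justifies that $(\bm\chi^\star,\bm\xi^\star)$ equals the updates \eqref{eq:15}--\eqref{eq:16} evaluated at $(\bm d^\star,\bm u^\star,\bm p^{\mathrm{dl}\star},\bm p^{\mathrm{ul}\star})$ (the paper only asserts this via ``continuity of the arg-max''), but do not cite Bertsekas' Prop.~2.7.1 for it, whose hypotheses fail here because the $\bm\chi$-step maximizes $f_1$ rather than $f_2$ at fixed $\bm\xi$ and the $(\bm d,\bm u,\bm p)$ block is mixed-integer with non-unique maximizers; phrase it instead as a two-block argument in which $(\bm\chi,\bm\xi)$ is a single block whose unique joint maximizer of $f_2$ is given by \eqref{eq:15}--\eqref{eq:16}.
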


\begin{IEEEproof}
	See Appendix~\ref{appendix:5}.
\end{IEEEproof}

With convergence of the proposed algorithm, we now verify the key implications of dynamic \cgls{fdd} by numerical simulations in the next section.

\section{Numerical Results}
\label{sec:4}
	
In this section, we evaluate the proposed dynamic \cgls{fdd} approach against conventional static \cgls{fdd} systems. We consider a scenario with $J$ satellites over a reference location. The satellites are randomly positioned at elevation angles ranging from $30^\circ$ to $80^\circ$ from the reference location and according to the minimum elevation criteria from 3GPP, thereby capturing a wide range of geometric configurations representative of existing satellite constellations. The serving area of the \cglspl{ue} is assumed to be a rural environment, modeled as a circular region with a radius of 10\,km, roughly an area of 314\,km$^2$, where terrestrial communication links are absent. A total of $K$ \cglspl{ue} are randomly distributed within this area and are served by the $J$ satellites operating at an altitude of 500\,km. Each satellite is equipped with $N = 8 \times 8$ antennas arranged in a \gls{upa} configuration. The inter-\cgls{ue} channels are modeled as purely \cgls{los}. The maximum transmit power at each satellite is set to $20$\,W, i.e., $p_j^{\mathrm{max}} = 20, \forall j$, while each \cgls{ue} transmits with a maximum power of $2$\,W in the \cgls{ul} direction, i.e., $p_k^{\mathrm{max}} = 2, \forall k$~\cite{3gpp}. We consider two frequency bands $\Omega_2 = 1.6$\,GHz and  $\Omega_1 = 2.4$\,GHz. Furthermore, equal bandwidths of $B_1 = B_2 = 10$\,MHz are assumed for both frequency bands. Corresponding to these system parameters, the big-$M$ parameter in Algorithm~\ref{algo:1} is $M = 5$. In the results, we consider two cases as the baseline results, first is where the spin of each satellite is set to 0, termed as ``All spin 0", and another is where the spin of each satellite is set to 1, termed as ``All spin 1", these cases denotes the conventional \cgls{fdd}, where we fix beforehand the bands for \cgls{ul} and \cgls{dl}. For these cases, the optimization problem in \eqref{eq:9} is solved using Algorithm~\ref{algo:1} but with fixed $\bm r$ in step-1. These baselines are then compared with the proposed ``dynamic \cgls{fdd}" case, where we select the best objective value over all the possible configurations of the spins, also termed as ``optimized spin".

\begin{figure}[t]
	\centering
	{\footnotesize
	\begin{tikzpicture}
		\begin{axis}[
			width=1\linewidth,
			height=0.7\linewidth,
			mygrid,
			xlabel={Sum-rate (bits/sec/Hz)},
			ylabel={CDF},
			ymin=0,
			ymax=1,
			grid style={dotted, very thick},
			legend style={
				at={(0.03,0.97)},
				anchor=north west,
				draw=gray,
				rounded corners=2pt,
				fill=white,
			},
			legend image post style={sharp corners},
			every axis plot/.append style={thick}
			]
			\addplot+[plotspinzero, mark=triangle*, mark repeat=20, mark options = {xshift=-1, scale = 1.6}] table[x=x, y=cdf, col sep=comma] {generated/cdf_spin0.csv};
			\addlegendentry{All spin 0}
			
			\addplot+[plotspinone, mark repeat=25, mark options = {xshift=1, scale = 1.6}] table[x=x, y=cdf, col sep=comma] {generated/cdf_spin1.csv};
			\addlegendentry{All spin 1}
			
			\addplot+[plotopt, mark=none] table[x=x, y=cdf, col sep=comma] {generated/cdf_optimized.csv};
			\addlegendentry{\shortstack[l]{Optimized spin\\(dynamic FDD)}}
		\end{axis}
	\end{tikzpicture}}
	\caption{CDF of the objective value $f_0$ for \(J=\CdfJ\) satellites and \(K=\CdfK\) users.}
	\label{fig:5}
\end{figure}

\begin{figure}[t]
	\centering{\footnotesize
	\begin{tikzpicture}
		\begin{axis}[
			width=1\linewidth,
			height=0.7\linewidth,
			mygrid,
			xlabel={Number of users \(K\)},
			ylabel={Average sum-rate (bits/sec/Hz)},
			xtick=data,
		grid style={dotted, very thick}, 
			legend style={
				at={(0.03,0.97)},
				anchor=north west,
				draw=gray,
				rounded corners=2pt,
				fill=white,
			},
			legend image post style={sharp corners},
			every axis plot/.append style={very thick}
			]
			\addplot+[plotspinzero, mark options = {xshift=-2, scale = 1.6}] table[x=K, y=spin0, col sep=comma] {generated/avg_vs_k.csv};
			\addlegendentry{All spin 0}
			
			\addplot+[plotspinone,  mark options = {xshift=2, scale = 1.6}] table[x=K, y=spin1, col sep=comma] {generated/avg_vs_k.csv};
			\addlegendentry{All spin 1}
			
			\addplot+[plotopt] table[x=K, y=optimized, col sep=comma] {generated/avg_vs_k.csv};
			\addlegendentry{\shortstack[l]{Optimized spin\\(dynamic FDD)}}
		\end{axis}
	\end{tikzpicture}}
	\caption{Average sum-rate $f_0$ versus number of users \(K\) for fixed \(J=\FixedJ\) satellites.}
	\label{fig:6}
\end{figure}

\begin{figure}[t]
	\centering{\footnotesize
	\begin{tikzpicture}
		\begin{axis}[
			width=1\linewidth,
			height=0.7\linewidth,
			mygrid,
			xlabel={Number of satellites \(J\)},
			ylabel={Average sum-rate (bits/sec/Hz)},
			xtick=data,
			grid style={dotted, very thick},
			legend style={
				at={(0.03,0.97)},
				anchor=north west,
				draw=gray,
				rounded corners=2pt,
				fill=white,
			},
			legend image post style={sharp corners},
			every axis plot/.append style={very thick}
			]
			\addplot+[plotspinzero,  mark options = {xshift=-2, scale = 1.6}] table[x=J, y=spin0, col sep=comma] {generated/avg_vs_j.csv};
			\addlegendentry{All spin 0}
			
			\addplot+[plotspinone,  mark options = {xshift=2, scale = 1.6}] table[x=J, y=spin1, col sep=comma] {generated/avg_vs_j.csv};
			\addlegendentry{All spin 1}
			
			\addplot+[plotopt] table[x=J, y=optimized, col sep=comma] {generated/avg_vs_j.csv};
			\addlegendentry{\shortstack[l]{Optimized spin\\(dynamic FDD)}}
		\end{axis}
	\end{tikzpicture}}
	\caption{Average sum-rate $f_0$ versus \(J\) for fixed number of users $(K=\FixedK)$.}
		\label{fig:7}
\end{figure}

\begin{figure}[t]
	\centering{\footnotesize
	\begin{tikzpicture}
		\begin{axis}[
			width=1\linewidth,
			height=0.7\linewidth,
			mygrid,
			xlabel={Number of users \(K\)},
			ylabel={Relative improvement (\%)},
			xtick=data,
			grid style={dotted, very thick},
			legend style={
				at={(0.03,0.97)},
				anchor=north west,
				draw=gray,
				rounded corners=2pt,
				fill=white,
			},
			legend image post style={sharp corners},
			every axis plot/.append style={very thick}
			]
			\addplot+[solid, color = mypantone, mark options = {fill=mypantone, scale=1.5}] table[x=K, y=vs_spin0, col sep=comma] {generated/improvement_vs_k.csv};
			\addlegendentry{w.r.t. all spin 0}
			
			\addplot+[solid, color = cyan!80!black, mark options = {fill=cyan!80!black, scale=1.5}] table[x=K, y=vs_spin1, col sep=comma] {generated/improvement_vs_k.csv};
			\addlegendentry{w.r.t. all spin 1}
		\end{axis}
	\end{tikzpicture}}
	\caption{Relative improvement of the optimized spin objective versus number of users \(K\) for fixed \(J=\FixedJ\) satellites, measured using average objective values.}
		\label{fig:8}
\end{figure}

\begin{figure}[t]
	\centering{\footnotesize
	\begin{tikzpicture}
		\begin{axis}[
			width=1\linewidth,
			height=0.7\linewidth,
			mygrid,
			xlabel={Number of satellites \(J\)},
			ylabel={Relative improvement (\%)},
			xtick=data,
			grid style={dotted, very thick},
			legend style={
				at={(0.03,0.97)},
				anchor=north west,
				draw=gray,
				rounded corners=2pt,
				fill=white,
			},
			legend image post style={sharp corners},
			every axis plot/.append style={very thick}
			]
			\addplot+[solid, color = mypantone, mark options = {fill=mypantone, scale=1.5}] table[x=J, y=vs_spin0, col sep=comma] {generated/improvement_vs_j.csv};
			\addlegendentry{w.r.t. all spin 0}
			
			\addplot+[solid, color = cyan!80!black, mark options = {fill=cyan!80!black,  scale=1.5}] table[x=J, y=vs_spin1, col sep=comma] {generated/improvement_vs_j.csv};
			\addlegendentry{w.r.t. all spin 1}
		\end{axis}
	\end{tikzpicture}}
	\caption{Relative improvement of the optimized spin objective versus \(J\) satellites for fixed \(K=\FixedK\) users, measured using average objective values.}
		\label{fig:9}
\end{figure}

Fig.~\ref{fig:5} shows the \gls{cdf} of the objective value $f_0$ for $J=3$ satellites and $K=25$ users. It can be observed that the proposed optimized spin (dynamic \cgls{fdd}) configuration consistently outperforms both fixed-spin baselines, which indicates a distribution-wide performance improvement. More precisely, at the $10^{\text{th}}$ percentile, the dynamic \cgls{fdd} achieves $39.59$ bits/s/Hz, compared with $30.19$ bits/s/Hz for the all spin 0 case and $28.54$ bits/s/Hz for the all spin 1 case, corresponding to gains of $31.1\%$ and $38.7\%$, respectively. At the median, dynamic \cgls{fdd}  provides gains of $20.2\%$ over all spin 0 and $26.0\%$ over all spin 1, while at the $90^{\text{th}}$ percentile the gains remain significant at $14.8\%$ and $20.0\%$, respectively. These results show that the proposed dynamic \gls{fdd} scheme improves not only the high-interference regime of the performance distribution, but also the medium and low-interference regimes.
	
Fig.~\ref{fig:6} presents the average sum-rate $f_0$ (bits/s/Hz) as a function of the number of users $K$ for a fixed number of satellites $J=2$. We can observe that the average sum-rate increases with $K$ for all considered schemes due to the larger scheduling diversity. However, the dynamic \cgls{fdd} consistently achieves the highest performance, and its advantage becomes more pronounced as the system load increases, implying higher efficiency in high-interference scenarios.  In particular, at $K=10$, the dynamic \cgls{fdd} attains $25.04$ bits/s/Hz, which is $5.38\%$ higher than the all spin 0 baseline and $9.87\%$ higher than the all spin 1 baseline. When $K$ increases to $40$, the average sum-rate rises to $53.06$ bits/s/Hz, and the corresponding gains increase to $19.69\%$ and $23.12\%$, respectively. This trend indicates that the proposed dynamic \cgls{fdd} becomes increasingly beneficial in high-interference scenarios, where interference-aware band-direction assignment can be exploited more effectively.
	
Fig.~\ref{fig:7} illustrates the average sum-rate versus the number of satellites $J$ for a fixed number of users, i.e., $K=40$. Increasing $J$ improves the performance of all schemes, since additional satellites provide more spatial resources and a larger aggregate transmission capability. Nevertheless,  dynamic \cgls{fdd}  scales more favorably than the baselines. For $J=1$, the gain is naturally small, namely $2.03\%$ over all spin 0 and $1.45\%$ over all spin 1, because inter-satellite interference is essentially absent in the single-satellite case. In contrast, when $J=2$, the optimized scheme reaches $53.06$ bits/s/Hz, corresponding to gains of $19.69\%$ and $23.12\%$, respectively, and for $J=3$ it further increases to $62.62$ bits/s/Hz, yielding gains of $24.41\%$ and $29.83\%$, respectively. These results clearly show that dynamic \gls{fdd} is particularly effective in dense multi-satellite deployments, where coordinated interference management becomes crucial.
	
Fig.~\ref{fig:8} quantifies the relative improvement achieved by the dynamic \cgls{fdd} as a function of the number of users $K$ for fixed $J=2$. The relative improvement is defined as the difference between the objective value achieved by dynamic \cgls{fdd} and that of the baseline schemes (all spin 0 or all spin 1), normalized by the corresponding baseline value. It can be seen that the gain increases monotonically with the number of users, which shows that the benefit of the proposed method becomes more pronounced as the number of users increases. Specifically, the improvement over the all spin 0 baseline rises from $5.38\%$ at $K=10$ to $19.69\%$ at $K=40$, while the improvement over the all spin 1 baseline increases from $9.87\%$ to $23.12\%$. This consistent widening of the performance gap indicates that fixed spin assignments are less capable of coping with the more complex interference patterns induced by larger users and satellites. By contrast, the optimized dynamic \gls{fdd} can exploit the additional flexibility more efficiently, thereby delivering increasingly larger gains under higher traffic load.
	
Fig.~\ref{fig:9} reports the relative improvement of the dynamic \cgls{fdd} objective as a function of the number of satellites $J$ for fixed $K=40$. The improvement is modest in the single-satellite case, with gains of only $2.03\%$ over all spin 0 and $1.45\%$ over all spin 1, which is expected because the \cgls{dof} that motivates dynamic \gls{fdd} is minimal when $J=1$. Once multiple satellites are active, however, the gain increases sharply. At $J=2$, the relative improvement reaches $19.69\%$ and $23.12\%$, and at $J=3$ it further rises to $24.41\%$ and $29.83\%$ with respect to the two baselines. Hence, this figure provides direct evidence that the proposed dynamic \cgls{fdd} is most valuable in the interference-limited regime for which it is designed, and that its advantage strengthens as the satellite deployment becomes denser.

\section{Conclusions}
\label{sec:5}

We investigated a new framework for spectrum sharing in dense \cgls{leo} satellite mega-constellations envisioned for future 6G networks. Motivated by the growing interference caused by satellite deployments operating over shared frequency bands, we revisited the limitations of conventional \cgls{fdd} systems with fixed \cgls{dl} and \cgls{ul} band allocations. To address this challenge, we proposed a dynamic  \cgls{fdd}  band framework that adapts the \cgls{dl} and \cgls{ul} bands based on the system considerations, which bring a new \cgls{dof} for resource allocation. Utilizing this new \cgls{dof}, we formulated a joint user scheduling, band assignment, and power allocation.  Efficient solutions were obtained through a combination of equivalence transformations, alternating optimization, and mixed-integer solvers. Numerical results confirm that dynamic \cgls{fdd} provides substantial performance gains compared to conventional \cgls{fdd} operation, achieving up to a 30\% increase in throughput in dense satellite deployments. Additionally, the performance improves as the system density, i.e., number of \cglspl{ue} and satellites increase.  These findings highlight the potential of flexible band assignment as a key enabler for interference management in satellite mega-constellations. The proposed approach offers a practical and effective solution for improving spectral efficiency in next-generation non-terrestrial networks.

\appendices

\section{Proof of Proposition~\ref{prop:1}}
\label{appendix:1}

Observe that the function $f_1$ is smooth \cgls{wrt} $\chi_{kj}^{\mathrm{dl}}$ and $\chi_{kj}^{\mathrm{ul}}$, for $\chi_{kj}^{\mathrm{dl}}, \chi_{kj}^{\mathrm{ul}} \in \mathbb{R}^+$. Further, the two variables $\chi_{kj}^{\mathrm{dl}}$ and $\chi_{kj}^{\mathrm{ul}}$ are separable keeping other variables fixed, i.e., cross diagonal elements in the Hessian matrix are zero. Its second derivatives satisfy ${\partial^2 f_1}/{(\partial \chi^{\mathrm{ul}}_{kj})^2} < 0$ and ${\partial^2 f_1}/{(\partial \chi^{\text{dl}}_{kj})^2} < 0$ within the domain. Hence, $f_1$ is strictly concave in $\chi_{kj}^{\mathrm{dl}}$ and $\chi_{kj}^{\mathrm{ul}}$. Similarly, $f_2$ is smooth and strictly concave \cgls{wrt} $\xi_{kj}^{\mathrm{dl}}$ and $\xi_{kj}^{\mathrm{ul}}$.

Consider the program in $\eqref{eq:14}$ its feasible set is $(\bm d,\allowbreak \bm r,\allowbreak \bm u, \allowbreak \bm p^{\mathrm{dl}}, \allowbreak \bm p^{\mathrm{ul}},\allowbreak \bm \chi^{\mathrm{dl}} , \bm \chi^{\mathrm{ul}}, \bm \xi^{\mathrm{dl}}, \bm \xi^{\mathrm{ul}}) $, where $(\bm d, \bm r, \bm u, \bm p^{\mathrm{dl}}, \bm p^{\mathrm{ul}} ) \in \mathcal{F}$, $\chi_{kj}^{\mathrm{dl}} \in \mathbb{R}^+, \chi_{kj}^{\mathrm{ul}} \in \mathbb{R}^+, \xi_{kj}^{\mathrm{dl}} \in \mathbb{R}, \xi_{kj}^{\mathrm{ul}} \in \mathbb{R}$, with $\mathcal{F} = \{(\bm d, \bm r, \bm u, \bm p^{\mathrm{dl}}, \bm p^{\mathrm{ul}}): \, \textit{satisfying }\eqref{eq:9b}-\eqref{eq:9g} \}$. For fixed $(\bm d, \bm r, \bm u, \bm p^{\mathrm{dl}}, \bm p^{\mathrm{ul}}, \bm \chi^{\mathrm{dl}} , \bm \chi^{\mathrm{ul}})$, we can obtain a unique optimal value as $\xi_{kj}^{\mathrm{dl}\star}$ and $\xi_{kj}^{\mathrm{ul}\star}$ that maximizes the objective $f_2$, due to strict concavity.  These unique values are  shown in \eqref{eq:15}. 

Now, if we put \eqref{eq:15} in $f_2$, we can exactly recover $f_1$. This implies, if we obtain a solution to $\eqref{eq:14}$ as $(
\bm d^\star, \allowbreak\bm r^\star, \allowbreak
\bm u^\star, \allowbreak
\bm p^{\mathrm{dl}\star}, \allowbreak \bm p^{\mathrm{ul}\star},\allowbreak
\bm \chi^{\mathrm{dl}\star}, \allowbreak \bm \chi^{\mathrm{ul}\star},\allowbreak \bm \xi^{\mathrm{dl}\star}, \allowbreak \bm \xi^{\mathrm{ul}\star} 
)$, then $(
\bm d^\star, \allowbreak \bm r^\star, \allowbreak
\bm u^\star, \allowbreak
\bm p^{\mathrm{dl}\star}, \allowbreak\bm p^{\mathrm{ul}\star}, \allowbreak
\bm \chi^{\mathrm{dl}\star},\allowbreak \bm \chi^{\mathrm{ul}\star}\allowbreak
)$, is a solution to $\eqref{eq:10}$. 

Similarly, we can also prove that if  $(
\bm d^\dagger,\allowbreak \bm r^\dagger, \allowbreak
\bm u^\dagger, \allowbreak
\bm p^{\mathrm{dl}\dagger},\allowbreak \bm p^{\mathrm{ul}\dagger}, \allowbreak
\bm \chi^{\mathrm{dl}\dagger},\allowbreak \bm \chi^{\mathrm{ul}\dagger}\allowbreak
)$ is a solution to $\eqref{eq:13}$, then  $(
\bm d^\dagger,\allowbreak\bm r^\dagger, \allowbreak
\bm u^\dagger, \allowbreak
\bm p^{\mathrm{dl}\dagger}, \allowbreak\bm p^{\mathrm{ul}\dagger}\allowbreak
)$ is a solution to $\eqref{eq:9}$.	
\IEEEQED

\section{Proof  of Proposition~\ref{prop:bigM}}
	\label{appendix:bigM}
	
From \eqref{eq:19}, and as $p_k^{\mathrm{dl}} \leq p_j^{\mathrm{max}}, p_k^{\mathrm{ul}} \leq p_k^{\mathrm{max}}$, we get 
\begin{align*}
	M \geq \max_{k,j} \Big\{ \sqrt{p_j^{\text{max}}}, \sqrt{p_k^{\text{max}}} \Big\}. 
\end{align*}
The same constant $M$ is also used as a conditional switch in constraint \eqref{eq:24}. For this constraint to remain inactive when required, $M$ must satisfy
	\begin{align*}
		M \geq \max_k \, \Big\lvert \sum_{j \in \mathcal{J}} d_{kj} r_{j} - \sum_{j \in \mathcal{J}} u_{kj} r_{j} \Big\rvert. 
	\end{align*}
	From constraint \eqref{eq:9c}, the maximum value of $\big\lvert \sum_{j \in \mathcal{J}} d_{kj} r_{j} - \sum_{j \in \mathcal{J}} u_{kj} r_{j} \big\rvert$ equals $1$.
In Big-$M$ linearization, selecting the smallest valid upper bound improves numerical stability and avoids deterioration of branching decisions in mixed-integer solvers~\cite{Dejans2025}. Therefore, we choose
\begin{equation*}
	M = \Big\lceil \max_{k,j} \big\{1,\ \sqrt{p_j^{\text{max}}},\ \sqrt{p_k^{\text{max}} }\Big\} \Big\rceil.
    \IEEEQEDhereeqn
\end{equation*}

\section{Proof of Proposition~\ref{prop:4}}
\label{appendix:4}

Proposition~\ref{prop:2} establishes the equivalence between $f_2$ and $f_3$. Further, from \eqref{eq:19} and \eqref{eq:20}, together with constraints \eqref{eq:25d} and \eqref{eq:25e}, it follows that $f_3$ and $f_4$ are also equivalent. Now, consider the constraints. Constraints \eqref{eq:9c} and \eqref{eq:9e}--\eqref{eq:9g} remain unchanged in both \eqref{eq:17} and \eqref{eq:25}. For constraint \eqref{eq:25b}, constraint \eqref{eq:9b} is equivalently replaced by \eqref{eq:24}, as established in the discussion below \eqref{eq:24}. Further, constraint \eqref{eq:25c} is readily seen to be equivalent to \eqref{eq:9d} under the substitution in \eqref{eq:20}. Finally, as established in the discussion between \eqref{eq:20} and \eqref{eq:23}, the nonlinear constraints in \eqref{eq:20} are equivalent to the linear system in \eqref{eq:23}.
\IEEEQED

\section{Proof of Theorem~\ref{prop:5}}
\label{appendix:5}
	
 First, we prove the monotone increase of the objective $f_2$ in each iteration of the algorithm.  We introduce a subscript $i$ for the iteration index, and the following inequalities hold:
\begin{align*}
	&f_2 (\bm d_{(i)},   \bm r,  \bm u_{(i)}, \bm p^{\mathrm{dl}}_{(i)}, \bm p^{\mathrm{ul}}_{(i)}, \,\bm \chi^{\mathrm{dl}}_{(i)}, \bm \chi^{\mathrm{ul}}_{(i)}, {\bm \xi}^{\mathrm{dl}}_{(i)}, {\bm \xi}^{\mathrm{ul}}_{(i)}) \\
	&\overset{(a)}{=} f_1(\bm d_{(i)},   \bm r,  \bm u_{(i)}, \bm p^{\mathrm{dl}}_{(i)}, \bm p^{\mathrm{ul}}_{(i)}, \,\bm \chi^{\mathrm{dl}}_{(i)}, \bm \chi^{\mathrm{ul}}_{(i)}) \\
	&\overset{(b)}{<} f_1(\bm d_{(i)}, \bm r, \bm u_{(i)}, \bm p^{\mathrm{dl}}_{(i)}, \bm p^{\mathrm{ul}}_{(i)}, \,\bm \chi^{\mathrm{dl}}_{(i+1)}, \bm \chi^{\mathrm{ul}}_{(i+1)}) \\
	&  \overset{(c)}{=}f_2(\bm d_{(i)},  \bm r, \bm u_{(i)}, \bm p^{\mathrm{dl}}_{(i)}, \bm p^{\mathrm{ul}}_{(i)}, \,\bm \chi^{\mathrm{dl}}_{(i+1)}, \bm \chi^{\mathrm{ul}}_{(i+1)}, \bm \xi^{\mathrm{dl}}_{(i+1)}, \bm \xi^{\mathrm{ul}}_{(i+1)}) \\
	&\overset{(d)}{\leq}f_2(\bm d_{(i+1)}, \bm r, \bm u_{(i+1)}, \bm p^{\mathrm{dl}}_{(i+1)}, \bm p^{\mathrm{ul}}_{(i+1)}, \,\bm \chi^{\mathrm{dl}}_{(i+1)}, \bm \chi^{\mathrm{ul}}_{(i+1)}, \\
	&\hspace{190pt} \bm \xi^{\mathrm{dl}}_{(i+1)}, \bm \xi^{\mathrm{ul}}_{(i+1)})
\end{align*}
where $(a)$ is due to $f_2$ being strictly concave in $\xi_{kj}^{\mathrm{dl}}$ and $\xi_{kj}^{\mathrm{ul}}$ (Proposition~\ref{prop:1}).
Therefore, we can obtain unique values as in \eqref{eq:16} that maximizes $f_2$ keeping other variables fixed, and, if we substitute these values into $f_2$, we get back exactly $f_1$.  Thus, the equality follows as we update $\xi_{kj}^{\mathrm{dl}}$ and $\xi_{kj}^{\mathrm{ul}}$ using~\eqref{eq:16} in the algorithm. $(b)$ follows as keeping other variables fixed during each update of $\chi_{kj}^{\mathrm{dl}}$ and $\chi_{kj}^{\mathrm{ul}}$ strictly maximizes the objective. $(c)$ follows by similar logic as in $(a)$. Finally, $(d)$ follows as a joint update of $(\bm d, \bm u, \bm p^{\mathrm{dl}} \bm p^{\mathrm{ul}})$ maximizes the objective $f_2$ keeping other variables fixed.

The sequence of objective values $$\{f_2(\bm d_{(i)},   \bm r,  \bm u_{(i)}, \bm p^{\mathrm{dl}}_{(i)}, \bm p^{\mathrm{ul}}_{(i)}, \,\bm \chi^{\mathrm{dl}}_{(i)}, \bm \chi^{\mathrm{ul}}_{(i)}, {\bm \xi}^{\mathrm{dl}}_{(i)}, {\bm \xi}^{\mathrm{ul}}_{(i)})\}$$ 
generated by the algorithm is strictly increasing and the objective is upper bounded due to power constraints. Therefore, the objective converges to some finite value $f_2^\infty$. The set of iterates generated by the algorithm, denoted by the level set
\begin{align*}
	\mathcal{L} = \{(&\bm d,    \bm u, \bm p^{\mathrm{dl}}, \bm p^{\mathrm{ul}}, \,\bm \chi^{\mathrm{dl}}, \bm \chi^{\mathrm{ul}}, {\bm \xi}^{\mathrm{dl}}, {\bm \xi}^{\mathrm{ul}}) \\
	& \in \mathcal{F} \times \mathbb{R}_+ \times \mathbb{R}_+ \times \mathbb{R} \times \mathbb{R} \\
 &| \, f_2(\bm d,   \bm r,  \bm u, \bm p^{\mathrm{dl}}, \bm p^{\mathrm{ul}}, \,\bm \chi^{\mathrm{dl}}, \bm \chi^{\mathrm{ul}}, {\bm \xi}^{\mathrm{dl}}, {\bm \xi}^{\mathrm{ul}}) \\
 & \geq  f_2(\bm d_{(0)},   \bm r,  \bm u_{(0)}, \bm p^{\mathrm{dl}}_{(0)}, \bm p^{\mathrm{ul}}_{(0)}, \,\bm \chi^{\mathrm{dl}}_{(0)}, \bm \chi^{\mathrm{ul}}_{(0)}, {\bm \xi}^{\mathrm{dl}}_{(0)}, {\bm \xi}^{\mathrm{ul}}_{(0)}) \}
\end{align*}
is compact, so they have at least one accumulation point~\cite[Theorem 6.6.8]{Tao2014} with $\mathcal{F} = \{(\bm d, \bm u, \bm p^{\mathrm{dl}}, \bm p^{\mathrm{ul}}): \textit{satisfying }\eqref{eq:9b}-\eqref{eq:9c} \}$. Let $\{i_n\}$ be an index subsequence that converges such that 
\begin{align*}
	(\bm d_{(i_n)},  &  \bm u_{(i_n)}, \bm p^{\mathrm{dl}}_{(i_n)}, \bm p^{\mathrm{ul}}_{(i_n)}, \,\bm \chi^{\mathrm{dl}}_{(i_n)}, \bm \chi^{\mathrm{ul}}_{(i_n)}, {\bm \xi}^{\mathrm{dl}}_{(i_n)}, {\bm \xi}^{\mathrm{ul}}_{(i_n)}) \\
	& \to (\bm d^{\star}, \bm u^{\star}, \bm p^{\mathrm{dl}{\star}}, \bm p^{\mathrm{ul}{\star}}, \bm \chi^{\mathrm{dl}{\star}}, \bm \chi^{\mathrm{ul}{\star}}, \bm \xi^{\mathrm{dl}{\star}}, \bm \xi^{\mathrm{ul}{\star}}),
\end{align*}
as $n \to \infty$. Since, each update $(\bm \chi^{\mathrm{dl}}_{(i+1)}, \bm \chi^{\mathrm{ul}}_{(i+1)}, \bm \xi^{\mathrm{dl}}_{(i+1)}, \bm \xi^{\mathrm{ul}}_{(i+1)})$ is uniquely determined (cf. \eqref{eq:15} and \eqref{eq:16}), and $f_2$ is continuous with respect to its continuous variables, the continuity of the arg-max implies that
\begin{align*}
	&(\bm \chi^{\mathrm{dl}}_{(i_n+1)}, \bm \chi^{\mathrm{ul}}_{(i_n+1)}, \bm \xi^{\mathrm{dl}}_{(i_n+1)}, \bm \xi^{\mathrm{ul}}_{(i_n+1)})\\
&= \underset{\bm \chi^{\mathrm{dl}} , \bm \chi^{\mathrm{ul}}, \bm \xi^{\mathrm{dl}}, \bm \xi^{\mathrm{ul}}}{\arg\max} f_2(	\bm d_{(i_n)},   \bm r,  \bm u_{(i_n)}, \bm p^{\mathrm{dl}}_{(i_n)}, \bm p^{\mathrm{ul}}_{(i_n)}, \bm \chi^{\mathrm{dl}} , \bm \chi^{\mathrm{ul}}, \bm \xi^{\mathrm{dl}}, \bm \xi^{\mathrm{ul}})  \\
	& \to \underset{\bm \chi^{\mathrm{dl}} , \bm \chi^{\mathrm{ul}}, \bm \xi^{\mathrm{dl}}, \bm \xi^{\mathrm{ul}}}{\arg\max} f_2(\bm d^{\star}, \bm r, \bm u^{\star}, \bm p^{\mathrm{dl}{\star}}, \bm p^{\mathrm{ul}{\star}}, \bm \chi^{\mathrm{dl}} , \bm \chi^{\mathrm{ul}}, \bm \xi^{\mathrm{dl}}, \bm \xi^{\mathrm{ul}} ) \\
	& \qquad= (\bm \chi^{\mathrm{dl}{\star}}, \bm \chi^{\mathrm{ul}{\star}}, \bm \xi^{\mathrm{dl}{\star}}, \bm \xi^{\mathrm{ul}{\star}}),
\end{align*}
as $n \to \infty$.  Additionally, the update $(\bm d_{(i+1)}, \allowbreak  \bm u_{(i+1)},\allowbreak  \bm p^{\mathrm{dl}}_{(i+1)}, \allowbreak \bm p^{\mathrm{ul}}_{(i+1)})$ obtained by solving \eqref{eq:25}, for fixed $(\bm d, \bm u)$, the objective is concave in $\bm p^{\mathrm{dl}}$ and $ \bm p^{\mathrm{ul}}$; therefore, we get
\begin{align*}
	&(\bm d_{(i_n+1)},  \bm u_{(i_n+1)}, \bm p^{\mathrm{dl}}_{(i_n+1)}, \bm p^{\mathrm{ul}}_{(i_n+1)})\\
	& \in \underset{\bm d,    \bm u, \bm p^{\mathrm{dl}}, \bm p^{\mathrm{ul}}}{\arg\max} f_2(	\bm d,   \bm r,  \bm u, \bm p^{\mathrm{dl}}, \bm p^{\mathrm{ul}}, \bm \chi^{\mathrm{dl}}_{(i_n)} , \bm \chi^{\mathrm{ul}}_{(i_n)}, \bm \xi^{\mathrm{dl}}_{(i_n)}, \bm \xi^{\mathrm{ul}}_{(i_n)})  \\
	& \to \underset{\bm d,    \bm u, \bm p^{\mathrm{dl}}, \bm p^{\mathrm{ul}}}{\arg\max} f_2(\bm d, \bm r, \bm u, \bm p^{\mathrm{dl}}, \bm p^{\mathrm{ul}}, \bm \chi^{\mathrm{dl}\star} , \bm \chi^{\mathrm{ul}\star}, \bm \xi^{\mathrm{dl}\star}, \bm \xi^{\mathrm{ul}\star} ) \\
	& \qquad\in (\bm d^\star,    \bm u^\star, \bm p^{\mathrm{dl}\star}, \bm p^{\mathrm{ul}\star}).
\end{align*}
This implies for fixed $(\bm d, \bm u)$, the point $(\bm d^\star,    \bm u^\star, \bm p^{\mathrm{dl}\star}, \bm p^{\mathrm{ul}\star})$ is stationary in $\bm p^{\mathrm{dl}}$ and $ \bm p^{\mathrm{ul}}$. Finally, putting it all together, the accumulation point $(\bm d^{\star}, \bm u^{\star}, \bm p^{\mathrm{dl}{\star}}, \bm p^{\mathrm{ul}{\star}}, \bm \chi^{\mathrm{dl}{\star}}, \bm \chi^{\mathrm{ul}{\star}}, \bm \xi^{\mathrm{dl}{\star}}, \bm \xi^{\mathrm{ul}{\star}})$ is stationary in $\bm \chi^{\mathrm{dl}}, \bm \chi^{\mathrm{ul}}, \bm \xi^{\mathrm{dl}}, \bm \xi^{\mathrm{ul}}, \bm p^{\mathrm{dl}}$, and $ \bm p^{\mathrm{ul}}$,  for fixed $(\bm d, \bm u)$. Further, as we search over all possible $\bm r$, $\bm r^\star$ is optimal. 	 \IEEEQED

\balance
\bibliographystyle{IEEEtran}
\bibliography{IEEEtrancfg,IEEEabrv,ref}

\end{document}